\newtheorem{theorem}{\noindent \bf Theorem}
\newenvironment{proof}{{ \noindent \it Proof.}}{\hfill $\blacksquare$}
\renewcommand{\maketag@@@}[1]{\hbox{\m@th\normalsize\normalfont#1}}%
\begin{document}
\title{Integrated Sensing and Communication Enabled Cooperative Passive Sensing Using Mobile Communication System}
\author{
Zhiqing~Wei,~\IEEEmembership{Member,~IEEE,}
Haotian~Liu,~\IEEEmembership{Graduate Student Member,~IEEE,}
Hujun~Li,\\
Wangjun~Jiang,~\IEEEmembership{Graduate Student Member,~IEEE,}
Zhiyong~Feng,~\IEEEmembership{Senior Member,~IEEE,}
Huici~Wu,~\IEEEmembership{Member,~IEEE,}
Ping~Zhang,~\IEEEmembership{Fellow,~IEEE}

\thanks{This work was supported in part by the National Natural Science Foundation of China (NSFC) under Grant 62271081 and Grant U21B2014; 
in part by the Fundamental Research
Funds for the Central Universities under Grant 2024ZCJH01; 
and in part by the National Key
Research and Development Program of China under Grant 2020YFA0711302. This paper has published in IEEE Transactions on Mobile Computing (TMC), doi:10.1109/TMC.2024.3514113.}

\thanks{Zhiqing Wei, Haotian Liu, Wangjun Jiang, Zhiyong Feng, Huici Wu, and Ping Zhang are with Beijing University of Posts and Telecommunications, Beijing 100876, China (emails: \{weizhiqing; haotian\_liu; jiangwangjun; fengzy; dailywu; pzhang\}@bupt.edu.cn). \textit{Corresponding Authors:} Haotian Liu, Zhiqing Wei.

Hujun Li is with China Telecom Co., Ltd. Sichuan Branch, 
Chengdu 610031, China (email: 19108092304@189.cn).}}
\maketitle

\begin{abstract}
Integrated sensing and communication (ISAC) is a potential technology of the 
sixth-generation (6G) mobile communication system, 
which enables communication base station (BS) with sensing capability. 
However, the performance of single-BS sensing is limited, 
which can be overcome by multi-BS cooperative sensing. 
There are three types of multi-BS cooperative sensing, 
including cooperative active sensing, cooperative passive sensing, 
and cooperative active and passive sensing, 
where the multi-BS cooperative passive sensing has 
the advantages of low hardware modification cost and large sensing coverage. 
However, multi-BS cooperative passive sensing faces the challenges of 
synchronization offset mitigation and sensing information fusion. 
To address these challenges, a non-line of sight (NLoS) and line of sight (LoS) 
signal cross-correlation (NLCC) method is proposed to mitigate carrier frequency offset (CFO) and time offset (TO). 
Besides, a symbol-level fusion method of multi-BS sensing information is proposed. 
The discrete samplings of echo signals from multiple BSs are matched independently and coherently accumulated to improve sensing accuracy. 
Moreover, a low-complexity joint angle-of-arrival (AoA) and angle-of-departure (AoD) estimation method is proposed to reduce the computational complexity. 
Simulation results show that symbol-level multi-BS cooperative passive sensing scheme has an order of magnitude higher sensing accuracy than single-BS passive sensing. 
This work provides a reference for the research on multi-BS cooperative passive sensing.
\end{abstract}
\begin{IEEEkeywords}
Cooperative passive sensing, 
carrier frequency offset (CFO),
integrated sensing and communication (ISAC), 
ISAC signal processing, 
joint angle-of-arrival (AoA) and 
angle-of-departure (AoD) estimation, 
multi-BS cooperative sensing,
symbol-level fusion,
time offset (TO).
\end{IEEEkeywords}

\IEEEpeerreviewmaketitle

\section{Introduction}

\subsection{Background and Motivations}
The commonalities between wireless communication and radar sensing in terms of system design, hardware and frequency resources lay the foundations for integrated sensing and communication (ISAC) \cite{wei2023carrier,ISAC_jwj_1,ISAC_lht}.
Recently, ISAC has become one of the key potential technologies 
of the sixth-generation (6G) mobile communication system \cite{10273259,dou2024integrated,ni2021uplink,liu2022integrated,wei2023integrated}.
The 6G mobile communication system is expected to 
facilitate the implementation of 
new applications, such as smart home, smart transportation and 
Internet of vehicles (IoV) \cite{xiao2023novel,li2023reliability,li2023iq,wei2024deep}.
These emerging applications urgently need high-accuracy sensing 
and high-capacity communication. 
%
%
Multi-BS cooperative sensing is a feasible method to provide high-accuracy sensing service \cite{zhang2021enabling, ISAC_jwj_2, liu2024carrier}.

According to the modes of sensing, 
multi-BS cooperative sensing can be categorized into 
multi-BS cooperative active sensing, 
multi-BS cooperative passive sensing, 
and multi-BS cooperative active and passive sensing \cite{ISAC_jwj_1,wei2024deep}. 
Among the three types of cooperative sensing,
multi-BS cooperative passive sensing has the following advantages.
\begin{enumerate}
\item
\textbf{Low hardware modification cost:} 
The signal transmission process of passive sensing is similar to the communication process, 
which does not require the BS to transmit and receive signals simultaneously.
Therefore, passive sensing can reuse 
the hardware equipment of communication system, 
greatly reducing the cost of hardware modification~\cite{wei2024deep,li2019residual}.
\item
\textbf{Large sensing coverage:}
Passive sensing provides the advantages of continuous, 
all-encompassing reception of external radiation sources, 
eliminating the need for beam switching and beam scanning. 
Meanwhile, passive sensing offers a broader coverage of 
sensing compared with active sensing~\cite{xianrong2020research}.
\end{enumerate}

However, there are the following two main challenges in multi-BS cooperative passive sensing.
\begin{enumerate}
    \item \textbf{Synchronization offsets mitigation:} 
    Since the transmitter (Tx) and receiver (Rx) of 
    passive sensing are separated, 
    imperfect synchronization will cause time offsets (TOs) and carrier frequency offsets (CFOs), which further deteriorates the accuracy of range and velocity estimation~\cite{zhang2022integration}. 
    \item \textbf{Sensing information fusion:} 
    The current fusion methods of passive sensing information seldom consider orthogonal frequency division multiplying (OFDM) signal models, and it is difficult to achieve absolute velocity estimation of target.
\end{enumerate}

\begin{table*}[!ht]
\centering
\caption{A summary for the related work, with the abbreviations CS: compressed sensing, MAP: maximum a posteriori, CACC: cross-antenna cross-correlation, CASR: cross antenna signal ratio.}
\label{tab_0}
\resizebox{0.9\textwidth}{!}{%
\renewcommand{\arraystretch}{1.2}
\begin{tabular}{|c|c|c|c|c|}
\hline
\textbf{Challenge} & \textbf{Field or Type} & \multicolumn{1}{c|}{\textbf{\begin{tabular}[c]{@{}c@{}}Related\\ work\end{tabular}}} & \multicolumn{1}{c|}{\textbf{Innovation}} & \multicolumn{1}{c|}{\textbf{Shortcoming}} \\
\hline
\multirow{9}{*}{\textbf{\begin{tabular}[c]{@{}c@{}}Imperfect \\ synchronization \\ problem \end{tabular}}} & \multirow{5}{*}{\textbf{Radar}} & \cite{younis2006performance}  & \begin{tabular}[c]{@{}c@{}}Establish a dedicated synchronization link to obtain correction signal\\ and compensate for it to achieve high-accuracy phase\end{tabular}  & Cause data acquisition interruptions \\ \cline{3-5} 
                  &                   & \cite{jin2019advanced} & \begin{tabular}[c]{@{}c@{}}The synchronization signal is exchanged by the time difference\\ between the receiving and transmitting windows\end{tabular}  & Poor anti-noise performance   \\ \cline{3-5} 
                  &                   & \cite{cai2022advanced} & \begin{tabular}[c]{@{}c@{}}Exploit CS and MAP estimation to eliminate the noise\\ in synchronization phase\end{tabular}  & Unsuitable for communication signal \\ \cline{2-5} 
                  & \multirow{3}{*}{\textbf{WiFi}} &  \cite{qian2018widar2}  & \begin{tabular}[c]{@{}c@{}}Explore a CACC method to eliminate the synchronization error\\ in WiFi sensing system for human tracking \end{tabular} & Vulnerable to the multipaths \\ \cline{3-5} 
                  &                   &  
                  \cite{zeng2019farsense}  & Use the CASR method to obtain higher sensing accuracy than CACC  & \begin{tabular}[c]{@{}c@{}}Unsuitable for various sensing \\ information fusion \end{tabular}  \\ \hline
                  \multirow{9}{*}{\textbf{\begin{tabular}[c]{@{}c@{}}Sensing \\ information \\ fusion \end{tabular}}} & \multirow{4}{*}{\textbf{Data-level}} & \cite{weiss2009direct}  & \begin{tabular}[c]{@{}c@{}}A method of target location based on discrete grid point \\ and node parameter matching is proposed \end{tabular}  & Vulnerable to grid size and noise \\ \cline{3-5} 
                  &                   & \cite{ren2021improved}  & Propose a fusion method by filtering the abnormal and unstable values   & Discard some sensing information \\ \cline{3-5} 
                  &    & \cite{jiang2021research} & \begin{tabular}[c]{@{}c@{}} Explore a weighted data fusion method for multi-node \\ by utilizing the optimal weight assignment \end{tabular} & Inadequate use of sensing information  \\ \cline{2-5} 
                  & \multirow{3}{*}{\textbf{Symbol-level}} &  \cite{wei2023symbol}  &  \begin{tabular}[c]{@{}c@{}}Propose the concept of symbol-level fusion and  multi-BS \\ cooperative sensing signal processing method \end{tabular} & Unsuitable for multi-antenna system  \\ \cline{3-5} 
                  &                   &  \cite{liu2024target}  & \begin{tabular}[c]{@{}c@{}} Propose a symbol-level fusion method by using angle and delay \\information in macro and micro BSs cooperative sensing scenario \end{tabular} & Unsuitable for velocity estimation  \\ \hline   
\end{tabular} }
\end{table*}

\subsection{Related Work}
A summary of the related work about handling the imperfect synchronization and sensing information fusion problems is shown in Table \ref{tab_0}, and the details are as follows.

The imperfect synchronization problem in cooperative passive sensing has been studied in radar sensing and WiFi sensing.
In radar sensing, oscillator synchronization is usually 
aided by global positioning system (GPS). 
However, there are still synchronization errors in passive radar, which prevent further improvements in sensing accuracy. 
To this end, Younis \textit{et al.} in \cite{younis2006performance} 
proposed an alternate oscillator signal synchronization scheme that exchanges the oscillator signals by establishing a dedicated interstellar synchronization link to obtain the correction signal and compensates for it to achieve high-accuracy phase.
However, exchanging oscillator signals may cause
data acquisition interruptions.
Jin \textit{et al.} in \cite{jin2019advanced} 
adopted the time gap between the end of the echo reception window and the beginning of the next pulse repetition time to exchange the synchronization signals to improve the synchronization accuracy further.
However, the performance of this method is affected by noise. 
In response, Cai \textit{et al.} 
 in \cite{cai2022advanced} 
exploited compressed sensing (CS) and 
maximum a posteriori (MAP) estimation 
to eliminate the noise in synchronization 
to achieve high synchronization accuracy. 
The above methods are specific to radar signals, and it is unclear whether they are applicable to communication signals.
In terms of WiFi sensing, 
Zhang~\textit{et al.} summarized the methods to 
deal with clock desynchronization, 
including reference signal method, 
cross-antenna cross-correlation (CACC) method, 
and cross-antenna signal ratio (CASR) method \cite{zhang2022integration}.
The quality of the constructed reference signal significantly impacts the performance of the reference signal method.
Qian \textit{et al.} in \cite{qian2018widar2} used the CACC method to locate target 
with a single WiFi link for human tracking. 
However, this method is susceptible to the 
number of signal propagation paths. 
Therefore, Ni \textit{et al.} in \cite{ni2021uplink} proposed 
an alternative scheme called mirrored multiple signal classification (MUSIC), 
which exploits the symmetry between the unknown parameters and reduces the computational complexity.
However, the mirrored MUSIC method is limited by sensing coverage.
To this end, based on the CASR principle, 
Zeng \textit{et al.} in \cite{zeng2019farsense} used the 
channel state information (CSI) ratio of two received antennas to obtain better sensing coverage and accuracy than CACC.
The above CACC method amplifies noise
and CASR method cannot be used for the signal fusion of cooperative passive sensing.

The other challenge in cooperative passive sensing is 
sensing information fusion.
There are two main types of sensing information fusion methods in multi-BS cooperative passive sensing: 
data-level and symbol-level fusion.
Data-level fusion is to fuse the physical parameters of target estimated by multiple nodes, 
thereby estimating the location and velocity of target.
Weiss \textit{et al.} in \cite{weiss2009direct} 
explored a method of gridding the possible locations of target, combining the delay estimation parameter of each node to estimate the location of target.
However, the performance of this method is 
affected by grid size and noise.
To minimize the impact of noise on the sensing accuracy, 
Ren \textit{et al.} \cite{ren2021improved} presented 
a filtering and fusion method by evaluating the stability and consistency 
of sensing information of each node. 
However, this approach discards the sensing information of some nodes. 
In order to leverage the sensing information of all nodes, 
Jiang \textit{et al.} in \cite{jiang2021research} explored a weighted data fusion method for cooperative radar to 
obtain high-accuracy target position estimation 
by utilizing the optimal weight assignment.
Data-level fusion is simple to implement but has limited sensing accuracy.

Symbol-level fusion involves fusing the symbols of multiple nodes, 
where the phase information of each symbol contains the
target's parameters obtained from the sampled echo signal
after demodulation. This type of fusion, occurring prior to parameter estimation, can yield greater performance gains than data-level fusion~\cite{wei2024deep}.
Wei~\textit{et al.} in \cite{wei2023symbol} proposed the concept of symbol-level fusion and a symbol-level fusion method for multi-BS cooperative active sensing. However, \cite{wei2023symbol} does not take angle estimation into consideration.
To this end, Liu~\textit{et al.} in \cite{liu2024target} proposed a symbol-level fusion method using angle and delay parameters in macro and micro BSs cooperative sensing.

The symbol-level fusion improves the received signal-to-noise ratio (SNR) and 
obtains better sensing performance than data-level fusion. 
Therefore, this paper is dedicated to 
the symbol-level fusion approach.
However, the symbol-level fusion methods in the above literature do not investigate the Doppler frequency shift of target in passive sensing as well as fully utilizing the geometric relationship between target and multiple BSs.

\subsection{Our Contributions}
To address the above challenges,
we propose a high-accuracy symbol-level multi-BS 
cooperative passive sensing scheme applicable to 
mobile communication systems. 
The detailed contributions are as follows.
\begin{itemize}
    \item \textbf{Symbol-level multi-BS cooperative passive sensing:} 
    For target localization with symbol-level fusion of multi-BS sensing information, 
    the searching scope is gridded.
    Then, the feature vectors of target's position are 
    constructed and accumulated, 
    further being applied to search the 
    optimal estimation of target's location 
    from the grid points. 
    This method achieves noise suppression and 
    high-accuracy target localization.
    To estimate the absolute velocity of target, 
    the searching scopes of angles and magnitudes of target's velocity are gridded. Then, an expression for the Doppler frequency shift due to 
    the velocity of target is provided at the PBS.
    The velocity feature vectors are constructed and accumulated to 
    realize the high-accuracy absolute velocity estimation of target.
    \item \textbf{CFO and TO elimination method:} 
    For solving the imperfect synchronization problem between transmitting BS (TBS) and passive BS (PBS) 
    and avoiding ranging and velocity ambiguity, 
    we propose a non-line of sight (NLoS) and line of sight (LoS) signal cross-correlation (NLCC) method 
    on the PBS side. 
    The signals on the LoS and NLoS 
    paths between TBS and PBS are cross-correlated 
    to eliminate the synchronization error. 
    Simulation results demonstrate that this method mitigates 
    CFO and TO in asynchronous multiple BSs.
    \item \textbf{Joint AoA and AoD estimation method:}
    A low-complexity, high-accuracy joint AoA and AoD estimation method 
    is proposed to reduce the computational complexity of 
    the AoA and AoD estimation, which achieves 
    the same accuracy as the traditional two-dimensional MUSIC (2D-MUSIC) method 
    while reducing the computational complexity.
\end{itemize}

The rest of this paper is organized as follows. 
Section \ref{se2} presents the ISAC signal model. 
In Section \ref{se3}, the multi-BS sensing preprocessing is proposed. 
In Section \ref{se4}, the symbol-level fusion method of multi-BS sensing information is studied, 
including the estimation methods of location and absolute velocity. 
In Section \ref{se5}, the performance analysis of 
radar sensing is derived. 
Section \ref{se6} shows the simulation results and Section \ref{se7} summarizes this paper. 

\textit{\textbf{Notations:}} $\{\cdot\}$ typically stands for a set of various index values. 
Vectors and matrices are written as bold letters and capital bold letters, respectively. The $(m,n)$-th element of the matrix $\mathbf{H}$ is denoted by $\mathbf{H}|_{m,n}$. $\{H_i\}_{i=1}^I$ represents the set of $I$ elements.
$\mathbb{C}$ denotes the set of complex numbers. 
$\left(\cdot\right)^{\text{T}}$, $\left(\cdot\right)^{\text{H}}$, 
$\left(\cdot\right)^{\text{*}}$, and $\left(\cdot\right)^{-1}$ stand for the transpose operator, conjugate transpose operator, conjugate operator, and inverse operator, respectively. 
$\otimes$ and $\circ$ represent the Kronecker and Hadamard product, respectively.
$\arcsin\left({\cdot}\right)$ denotes the inverse sine function. $\left\lfloor\cdot\right\rfloor$ represents floor function. $\text{vec}(\cdot)$ represents the matrix vectorization.
A complex Gaussian random variable $\mathbf{u}$ with mean $\mu_u$ and variance $\sigma_u^2$ is denoted by $\mathbf{u} \sim \mathcal{CN}\left(\mu_u,\sigma_u^2\right)$.

\section{System Model} \label{se2}

\begin{figure}[!ht]
    \centering
    \includegraphics[width=0.45\textwidth]{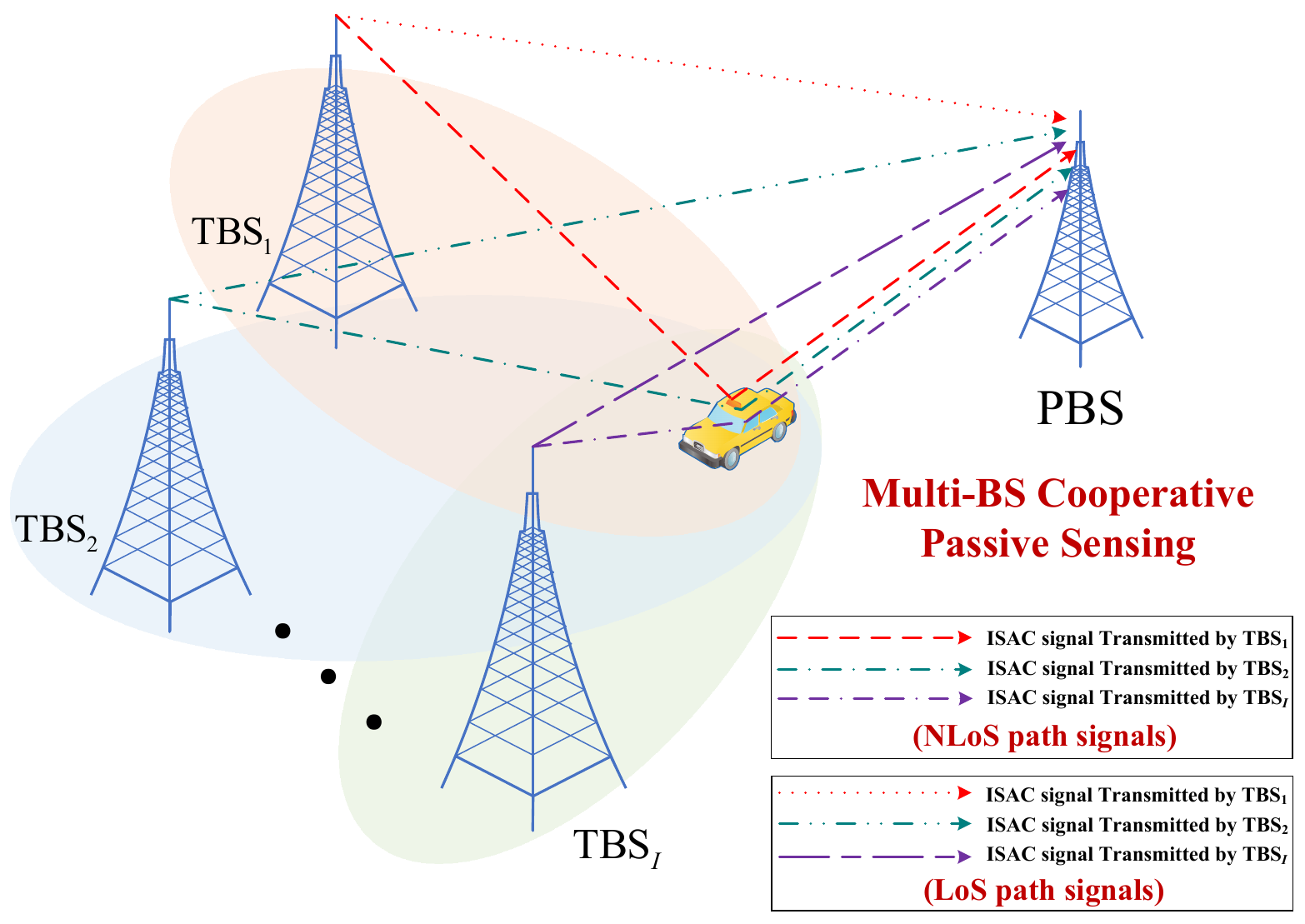}
    \caption{ISAC-enabled multi-BS cooperative passive sensing}
    \label{fig1:scenario}
\end{figure}

\subsection{System Setup}\label{sec2-A}

In this paper, we consider an ISAC-enabled multi-BS cooperative passive sensing scenario. 
As shown in Fig. \ref{fig1:scenario}, 
there are $I$ TBSs 
transmitting ISAC signals 
for passive sensing and downlink communication. 
PBS is receiving the downlink echo signal transmitted by TBSs 
from the LoS paths and the NLoS paths. 
The NLoS path refers to the path where the signal transmitted by TBS is reflected by target and reaches PBS. 
Each TBS employs time-division for communication and sensing, serving the users within its cell~\cite{liu2022integrated}.
Without loss of generality, 
there are two assumptions in this paper 
\cite{ni2021uplink,zhang2022integration,venkatesh2007non,wang2009new}.

\begin{itemize}
    \item 
    Since TBSs are static, 
    the PBS knows the locations of TBSs in advance.
    Then, the PBS can locate the absolute 
    location of target through relative distance estimation.
    \item 
    PBS knows the transmitted communication data of TBS in one of 
    the following two ways: 
    1) TBS and PBS are connected by fiber optic cables, 
    so that the PBS knows the transmitted data of TBS; 
    2) PBS could recognize and recover the communication data of TBS.
\end{itemize}

Meanwhile, each BS has a unique BS identity code
to distinguish the signals of different BSs. 
Therefore, the signals from different TBSs can be 
distinguished at the PBS side \cite{chouchane2009defending}.

\subsection{Signal Model}

In this paper, we consider the BS with
multiple input multiple output-OFDM (MIMO-OFDM) and uniform linear array (ULA).
For the $i\in \{1,2,\cdots,I\}$-th TBS, there are $N_\text{Tx}^{i}$ 
transmitting antennas and $N_\text{Rx}^{i}$ receiving antennas.
For each PBS, there are $N_\text{Rx}^{\text{p}}$ receiving antennas.

\subsubsection{Transmit signal model}
The transmit signals of $I$ ISAC TBSs adopt OFDM signals, which are generally defined as \cite{liu2024target,liu2024carrier,wei2023carrier} 
\begin{equation} \label{eq5}
x_{i}(t)=\sum_{m=1}^{M_{\text{sym}}}\sum_{n=1}^{N_{\text{c}}}d_{m,n}^ie^{j2\pi(f_\text{c}+n\Delta f)t}\text{rect}\left(\frac{t-(m-1)T}{T}\right),
\end{equation}
where $N_\text{c}$ and $M_\text{sym}$ denote the number of subcarriers and OFDM symbols, respectively; $n\in\left\{1,2,\cdots,N_\text{c}\right\}$ and $m\in\left\{1,2,\cdots,M_\text{sym}\right\}$ represent the indices of subcarriers and OFDM symbols, respectively; $d_{m,n}^i$ denotes the modulation symbol of the $i$-th TBS and $f_\text{c}$ is the carrier frequency; $\Delta f=1/T_\text{ofdm}$ is subcarrier spacing with $T_\text{ofdm}$ being the elementary OFDM symbol duration; $T=T_\text{ofdm}+T_\text{CP}$ is the total duration of OFDM symbol with $T_\text{CP}$ being the duration of cyclic prefix (CP); $\text{rect}\left(\cdot\right)$ represents a rectangular window function with width $T$.

\subsubsection{Received signal model}

According to Section \ref{sec2-A}, 
there are LoS and NLoS paths between TBSs and PBS 
in multi-BS cooperative passive sensing.
We assume that the real location of target is $\left(x_\text{tar},y_\text{tar}\right)$.
In terms of NLoS path, 
the distance between the target and the PBS is $r_\text{p,ns}$, 
the distance between the target and the $i$-th TBS is $r_{i,\text{ns}}$, 
and the corresponding delays are $\tau_\text{p,ns}=r_\text{p,ns}/c$ 
and $\tau_{i,\text{ns}}=r_{i,\text{ns}}/c$, respectively, where $c$ is the speed of light. 
In terms of LoS path, 
since the location of each BS is known, 
the distance between the $i$-th TBS and PBS is known 
and is set to $r_{i,\text{s}}$. 
Then, we have $\tau_{i,\text{s}}=r_{i,\text{s}}/c$.

The error of synchronization triggered by out-of-sync clocks and mismatched oscillators  between TBSs and PBS 
leads to potential time-varying TO and CFO \cite{wu2022joint}. 
Since this type of synchronization error originates from the hardware, we can assume that the electromagnetic signal (LoS and NLoS paths) between the TBSs and PBS carries the same synchronization error.
Therefore, in this paper, we use PBS as the reference point and define the TO between the $i$-th TBS and PBS as $\xi_{f,i}(m)$, and the CFO as $\xi_{\tau,i}(m)$. 
The phase shift of the signal in NLoS path 
consists of CFO, TO, delay $\tau_{\text{p},\text{ns}}^i=\tau_{i,\text{ns}}+\tau_{\text{p,ns}}$ 
and Doppler frequency shift $f_{i,\text{p}}$. 
$f_{i,\text{p}}$ is expressed as (\ref{eq9}) 
by \hyperref[theorem1]{\textbf{Theorem 1}}. 
The phase shift of the signal in LoS path 
includes path loss, delay, CFO, and TO.

\begin{theorem} \label{theorem1}
    The target is located between TBS and PBS 
    while traveling at an absolute velocity $\vec{v}$. 
    When the space coordinate systems of BS and target 
    are unified \cite{ISAC_jwj_2}, 
    the total Doppler frequency shift at PBS side is
    \begin{equation} \label{eq9}
       \begin{aligned}
            &f_{i,\text{p}} \\ & =\frac{-v f_\text{c}}c{\left[\cos\left(\theta-\phi_{\text{p},\text{ns}}^i\right)+\cos\left(\theta-\theta_{i,\text{ns}}\right)\right]}, \theta\in[0,2\pi), 
       \end{aligned}           
    \end{equation}
    where $v$ and $\theta$ are the magnitude and angle of target's absolute velocity, respectively; 
    $\theta_{i,\text{ns}}$ and $\phi_{\text{p,ns}}^i$ denote the AoD of 
the signal from the $i$-th TBS to target and the AoA of the signal 
reflected from target arriving at the PBS, respectively.
\end{theorem}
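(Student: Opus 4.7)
The plan is to decompose the NLoS propagation into its two legs (from the $i$-th TBS to the target, and from the target back to the PBS), compute the Doppler contribution of each leg independently via a projection of $\vec{v}$ onto the corresponding line of sight, and then sum the two contributions. Working in the unified two-dimensional coordinate system of \cite{ISAC_jwj_2}, I would write the target's velocity as $\vec{v}=v(\cos\theta,\sin\theta)$; the AoD $\theta_{i,\text{ns}}$ then fixes the unit vector pointing from the $i$-th TBS toward the target, and the AoA $\phi_{\text{p},\text{ns}}^i$ fixes the unit vector pointing from the PBS toward the target.

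For the first leg, the narrowband Doppler relation gives the frequency observed at the moving target as $f_\text{c}\bigl(1-\vec{v}\cdot\hat{u}_{\text{TBS}\to\text{tar}}/c\bigr)$, where $\hat{u}_{\text{TBS}\to\text{tar}}=(\cos\theta_{i,\text{ns}},\sin\theta_{i,\text{ns}})$. Evaluating this inner product collapses to $v\cos(\theta-\theta_{i,\text{ns}})$, so the target-side Doppler contribution equals $-(v f_\text{c}/c)\cos(\theta-\theta_{i,\text{ns}})$. For the second leg, the target acts as a moving re-radiator at the shifted carrier and the PBS sees an additional shift determined by the projection of $\vec{v}$ onto the direction from target to PBS. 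Because that direction is anti-parallel to the unit vector associated with the AoA, the same projection argument yields $-(v f_\text{c}/c)\cos(\theta-\phi_{\text{p},\text{ns}}^i)$.

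Adding the two contributions and discarding the second-order $v^2/c^2$ cross term, which is negligible under the standard $v\ll c$ regime for vehicular and pedestrian targets, yields the expression in (\ref{eq9}). The main obstacle I anticipate is not the physics itself but the bookkeeping of angle conventions: I must check that $\theta_{i,\text{ns}}$ is measured from the TBS \emph{outward} toward the target and that $\phi_{\text{p},\text{ns}}^i$ is measured from the PBS \emph{outward} toward the target, so that both cosines enter with identical sign and the overall minus sign in front of the bracket survives. If either convention were reversed, one of the cosines would flip and the additive bistatic form would collapse into the classical half-angle product form, which would be inconsistent with the rest of the paper's processing chain. Once the convention is pinned down consistently with the unified coordinate system, the identity holds for every $\theta\in[0,2\pi)$ and the proof is complete.
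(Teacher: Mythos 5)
Your proposal is correct and follows essentially the same route as the paper's proof: both decompose the bistatic path into the TBS-to-target and target-to-PBS legs, apply the Doppler shift of the second leg to the already-shifted carrier $f_\text{c}+f_{\text{p},1}^i$, and then discard the resulting $v^2/c^2$ cross term. Your explicit inner-product formulation $\vec{v}\cdot\hat{u}$ and the attention to the outward angle conventions are just a more careful restatement of the same argument.
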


\begin{proof}
Assume that the angle of the velocity of target is along the counterclockwise rotation angle $\theta \in \left[0,2\pi\right)$, with a magnitude $v$. As shown in Fig. \ref{appendix1}, the target is located between TBS and PBS, and the angle of the target's velocity is arbitrary.
\begin{figure}[!ht]
    \centering
    \includegraphics[width=0.45\textwidth]{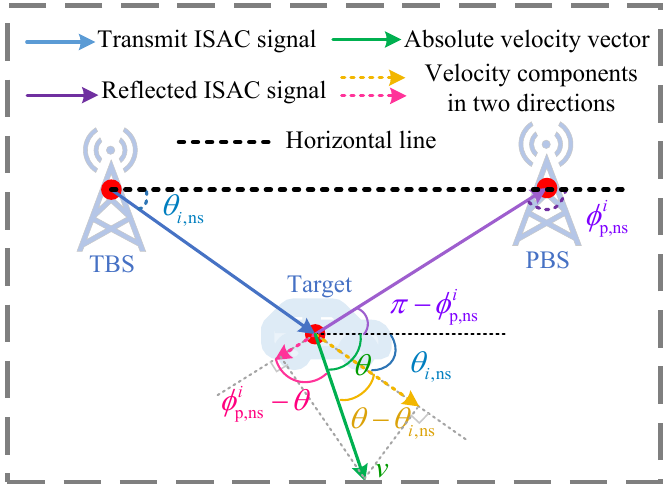}
    \caption{The locations of BSs and target}
    \label{appendix1}
\end{figure}

When the transmitted ISAC signal reaches the target, the movement of the target causes a Doppler frequency shift $f_{\text{p},1}^i=\frac{-v f_{\text{c}}\cos\left(\theta-\theta_{i,\text{ns}}\right)}c$ and the total carrier frequency is $f_{\text{p},2}^i=f_{\text{c}}+f_{\text{p},1}^i$. When the signal is reflected by the target and reaches the PBS, the echo ISAC signal generates a Doppler frequency shift $f_{\text{p},3}^i$, which is
\begin{equation} \label{apen_eq1}
    \begin{aligned}
       f_{\text{p},3}^i&=\frac{-v\cos\left(\theta-\phi_{\text{p},\text{ns}}^i\right)}{c}f_{\text{p},2}^i
       \\
       &=\frac{-v f_{\text{c}}\cos\left(\theta-\phi_{\text{p},\text{ns}}^i\right)}c\\
       &\quad +\frac{v^2f_{\text{c}}\cos\left(\theta-\phi_{\text{p},\text{ns}}^i\right)\cos\left(\theta-\theta_{i,\text{ns}}\right)}{c^2}.
    \end{aligned}
\end{equation}

It is worth noting that the carrier frequency is $f_{\text{p},2}^i$ instead of the original $f_\text{c}$. Therefore, the total Doppler frequency shift of the echo ISAC signal received at PBS with respect to the transmitted ISAC signal at TBS is
\begin{equation} \label{apen_eq2}
    \begin{aligned}  f_{i,\text{p}}&=f_{\text{p},1}^i+f_{\text{p},3}^i \\
   &=\frac{-v f_{\text{c}}\cos\left(\theta-\theta_{i,\text{ns}}\right)}c
   -\frac{v f_{\text{c}}\cos\left(\theta-\phi_{\text{p},\text{ns}}^i\right)}c\\
   &\quad +\frac{v^2f_{\text{c}}\cos\left(\theta-\phi_{\text{p},\text{ns}}^i\right)\cos\left(\theta-\theta_{i,\text{ns}}\right)}{c^2}.
    \end{aligned}
\end{equation}

As ${v^2}/{c^2}$ is very small, we ignore the quadratic term $\frac{v^2f_{\text{c}}\cos\left(\theta-\phi_{\text{p},\text{ns}}^i\right)\cos\left(\theta-\theta_{i,\text{ns}}\right)}{c^2}$ for simplicity, and (\ref{apen_eq2}) 
 is transformed to
\begin{equation} \label{apen_eq3}
   f_{i,\text{p}}=-\frac{v f_\text{c}}c{\left[\cos\left(\theta-\phi_{\text{p},\text{ns}}^i\right)+\cos\left(\theta-\theta_{i,\text{ns}}\right)\right]}.
\end{equation}
\end{proof}

\begin{figure*}[!ht]
    \begin{equation} \label{eq10}
        \begin{aligned}   \mathbf{y}_{\text{p}}^i\left(m,n\right)&=\mathbf{y}_{\text{p},\text{s}}^i\left(m,n\right)+\mathbf{y}_{\text{p},\text{ns}}^i\left(m,n\right)\\
       &=\underbrace{b_{\text{p},\text{s}}^i e ^ { j 2 \pi m T \xi _ { f , i }(m)}e^{-j2\pi n\Delta f\left[\tau_{i,\text{s}} + \xi _ { \tau , i }(m)\right]}\mathbf{a}_{\text{Rx},\text{p}}\left(\theta_{i,\text{s}}\right)\mathbf{a}_{\text{Tx}}^\text{T}\left(\theta_{i,\text{s}}\right)\mathbf{x}_{m,n}^i}_{\text{LoS path}} \\ & \quad + \underbrace{b_{\text{p},\text{ns}}^i e ^ { j 2 \pi m T \left [ f _ { i , \text{p}}+\xi_{f,i}\left(m\right)\right]}e^{-j2\pi n\Delta f\left[\tau_{\text{p},\text{ns}}^i+\xi_{\tau,i}\left(m\right)\right]}\mathbf{a}_{\text{Rx},\text{p}}\left(\phi_{\text{p},\text{ns}}^i\right)\mathbf{a}_{\text{Tx}}^{\text{T}}\left(\theta_{i,\text{ns}}\right)\mathbf{x}_{m,n}^i}_{\text{NLoS path}} + \mathbf{z}_\text{S}^{i} \in \mathbb{C}^{N_\text{Rx}^{\text{p}}\times 1}.
        \end{aligned}
    \end{equation}
    {\noindent} \rule[-10pt]{18cm}{0.1em}
\end{figure*}

According to \cite{sturm2011waveform}, 
for the PBS receiving the echo ISAC signal from the $i$-th TBS, 
the $i$-th baseband echo modulation symbol 
on the $n$-th subcarrier and the $m$-th OFDM symbol 
can be represented as (\ref{eq10}), 
where $b_{\text{p},\text{ns}}^i=\sqrt{\frac{\lambda^2}{\left(4\pi\right)^3r_{i,\text{ns}}^2r_{\text{p,ns}}^2}}\beta_{\text{p},\text{ns}}^i$ and $b_{\text{p},\text{s}}^i=\sqrt{\frac{\lambda^{2}}{\left(4\pi\right)^{3}r_{i,\text{s}}^{4}}}$ 
are the attenuations of LoS and NLoS paths, respectively.  
$\beta_{\text{p},\text{ns}}^i$ is the reflecting factor between the $i$-th TBS and target.
$\lambda=c/f_{\text{c}}$ is wavelength, $\mathbf{x}_{m,n}^i=\left[d_{m,n}^i,d_{m,n}^i,\cdots,d_{m,n}^i\right]\in\mathbb{C}^{N_{\text{Tx}}^{i}\times1}$ 
represents the data vector at the $m$-th OFDM symbol time. 
$\theta_{i,\text{s}}$ denotes the AoA and AoD of LoS path 
between the $i$-th TBS and PBS, which is known.
$\mathbf{z}_\text{S}^{i} \sim \mathcal{CN}\left(0,\sigma_\text{S}^{2}\right)$ 
is the AWGN of radar sensing channel. 
$\mathbf{a}_{\text{Rx},\text{p}}(\phi_{\text{p},\text{ns}}^i)$ 
and $\mathbf{a}_{\text{Tx}}\left(\theta_{i,\text{ns}}\right)$ 
denote the receive steering vector of PBS and 
the transmit steering vector of the $i$-th TBS, respectively, 
as shown in (\ref{eq11}).
\begin{equation} \label{eq11}
{\fontsize{9.5}{10}
    \begin{aligned}
     & \mathbf{a}_{\text{Rx},\text{p}}
    \left(\phi_{\text{p},\text{ns}}^i\right)=\\ & \left[e^{j2\pi(\frac{d_\text{r}}{\lambda}){\sin(\phi_{\text{p},\text{ns}}^i)}},e^{j4\pi(\frac{d_\text{r}}{\lambda}){\sin(\phi_{\text{p},\text{ns}}^i)}},\cdots,e^{j\left(N_\text{Rx}^\text{p}\right){2\pi(\frac{d_\text{r}}{\lambda}){\sin(\phi_{\text{p},\text{ns}}^i)}}}\right]^\text{T}, \\
    &\mathbf{a}_{\text{Tx}}\left(\theta_{i,\text{ns}}\right)= \\ & \left[e^{j{2\pi(\frac{d_\text{r}}{\lambda}){\sin(\theta_{i,\text{ns}})}}},e^{j{4\pi(\frac{d_\text{r}}{\lambda}){\sin(\theta_{i,\text{ns}})}}},\cdots,e^{j\left(N_\text{Tx}^{i}\right){2\pi(\frac{d_\text{r}}{\lambda}){\sin(\theta_{i,\text{ns}})}}}\right]^\text{T},  
    \end{aligned}  }
\end{equation}
where $d_\text{r}$ is the distance between two adjacent antennas.

There are $I$ TBSs in multi-BS cooperative sensing. 
Therefore, when the echo signals received by PBS come from $I$ TBSs, the echo modulation symbol received at PBS is
\begin{equation} \label{eq12}
\mathbf{y}_{\text{p}}=\sum_{i=1}^I
\sum_{m=1}^{M_{\text{sym}}}\sum_{n=1}^{N_{\text{c}}}\mathbf{y}_{\text{p}}^i(m,n).
\end{equation}

The location and absolute velocity of target are obtained 
by processing and fusing the echo modulation symbols received at PBS. 
Section \ref{se3} presents the preprocessing of the received data, 
including the estimation of AoA and AoD, 
the elimination of CFO and TO, 
and the coherent compression operation, etc. 
Section \ref{se4} investigates a symbol-level fusion method of multi-BS sensing information.

\section{Multi-BS Sensing Signal Preprocessing} \label{se3}

\begin{figure*}
    \centering
    \includegraphics[width=0.70\textwidth]{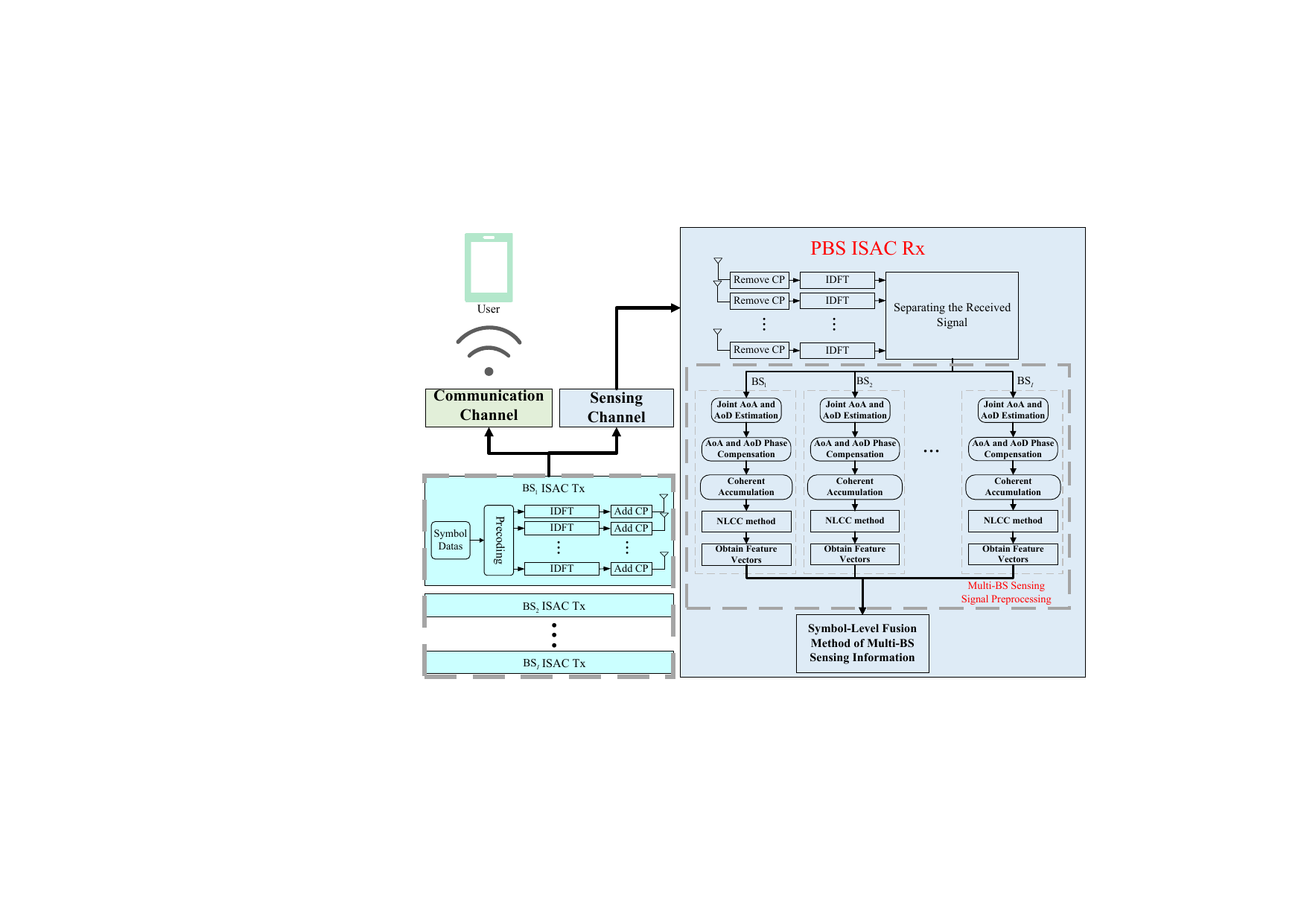}
    \caption{Multi-BS cooperative passive sensing scheme}
    \label{fig2}
\end{figure*}

As shown in Fig. \ref{fig2}, 
this section describes the preprocessing process 
of multi-BS sensing signals. 
Specifically, we separate the received signals by different BS identity code in PBS 
to obtain $I$ echo signals. 
For the $i$-th echo signal, 
we directly eliminate the AoA, AoD, 
and time delay of LoS path's signal via phase compensation, 
followed by coherent accumulation to obtain the delay-Doppler LoS information matrix 
$\mathbf{D}_{\text{p},\text{s}}^i$. 
For the NLoS path's signal, 
we initially estimate AoA and AoD by the proposed 
low-complexity joint AoA and AoD estimation method. 
Subsequently, phase compensation aligns 
the phase of time-frequency signals over all antennas. 
Coherent accumulation enhances the SNR, 
generating the delay-Doppler NLoS information matrix
$\mathbf{D}_{\text{p},\text{ns}}^i$.

CFO and TO in $\mathbf{D}_{\text{p},\text{ns}}^i$ 
will cause ambiguity in distance and velocity 
estimation~\cite{ni2021uplink}. 
To address this problem, 
a NLoS and LoS signal cross-correlation (NLCC) method is proposed 
for mitigating CFO and TO to generate the 
delay-Doppler information matrix $\mathbf{D}_{\text{p}}^i$. 
For $I$ delay-Doppler information matrices, 
we perform coherent compression processing 
to obtain range and velocity 
feature vectors (refer to Section \ref{se3-D} for details). 
Then, these feature vectors are utilized 
for symbol-level fusion of multi-BS sensing information in Section \ref{se4}.

\subsection{Low-complexity Joint AoA and AoD Estimation} \label{se3-1}

The traditional 2D-MUSIC method for joint AoA-AoD estimation 
in MIMO-OFDM ISAC system generally has 
high computational complexity 
\cite{pasya2014joint,zhang2010direction}. 
Therefore, a two-step low-complexity 
joint AoA and AoD estimation method is proposed in this section.

For the $i$-th echo NLoS signal 
(e.g., the NLoS path in (\ref{eq10})), 
when the transmitted data vector $\mathbf{x}_c^i$ 
is known, 
right-multiplying the generalized inverse matrix $\mathbf{X}_{m,n}^\dagger =\left(\mathbf{x}_{m,n}^{i}\right)^{\text{H}}\left(\mathbf{x}_{m,n}^{i}\left(\mathbf{x}_{m,n}^{i}\right)^{\text{H}}+\rho \mathbf{I}\right)^{-1}$ on $\mathbf{y}_{\text{p},\text{ns}}^i(m,n)$ yields

\begin{equation} \label{eq13}
   \begin{aligned}
&\mathbf{y}_{\text{p},\text{ns}}^{i}\big(m,n\big)\mathbf{X}_{m,n}^\dagger \\ & \approx b_{\text{p},\text{ns}}^{i} e ^ { j 2 \pi m T \left [ f _ { i , \text{p}}+\xi_{f,i}(m)\right]}e^{-j2\pi n\Delta f\left[\tau_{\text{p},\text{ns}}^{i}+\xi_{\tau,i}(m)\right]}  \\
&\quad\times\mathbf{a}_{\text{Rx},\text{p}}\left(\phi_{\text{p},\text{ns}}^i\right)\mathbf{a}_\text{Tx}^\text{T}\left(\theta_{i,\text{ns}}\right)\mathbf{I}+ \mathbf{z}_\text{ns}^i\mathbf{X}_{m,n}^\dagger,
    \end{aligned}
\end{equation}
where $\mathbf{I}\in\mathbb{C}^{N_{\text{Tx}}^{{i}}\times N_{\text{Tx}}^{{i}}}$ is identity matrix, $\mathbf{z}_\text{ns}^i$ is a AWGN vector. $\rho$ is a regularization parameter introduced to improve the accuracy and stability of the numerical recovery on the right side of (\ref{eq13})~\cite{tikhonov1963solution}. 
Therefore, the $(m,n)$-th all-antenna information matrix is denoted by
\begin{equation} \label{eq14}
\mathbf{Y}_{\text{p},\text{ns}}^i|_{m,n} = 
\mathbf{y}_{\text{p},\text{ns}}^{i}\big(m,n\big)\mathbf{X}_{m,n}^\dagger\in\mathbb{C}^{N_{\text{Rx}}^{\text{p}}\times N_{\text{Tx}}^{i}}.
\end{equation}

Observing (\ref{eq14}), 
it is discovered that the AoA introduces 
a linear phase shift along the received antenna elements of $\mathbf{Y}_{\text{p},\text{ns}}^i|_{m,n}$, 
while AoD introduces a linear phase shift 
along the transmit antenna elements of $\mathbf{Y}_{\text{p},\text{ns}}^i|_{m,n}$. 
The most critical observation is that 
the phase shifts introduced by AoA and AoD are 
completely orthogonal.

\subsubsection{Rough estimation stage} \label{se3-A-1}

In the rough estimation stage, since the initial OFDM symbol can avoid some kinds of interference, such as multipath interference and accumulated noise interference~\cite{hassan2017novel},
we combine the $(0,0)$-th all-antenna information matrix $\mathbf{Y}_{\text{p},\text{ns}}^i|_{0,0}$ with 2D-FFT method \cite{sturm2009ofdm,wei2023integrated} 
to obtain the rough estimation of AoA and AoD. 
Specifically, DFT is performed for 
each column and row of $\mathbf{Y}_{\text{p},\text{ns}}^i|_{0,0}$
to obtain rough estimation of AoA and AoD, respectively. 
The above operation is expressed 
as \cite{liu2022integrated}
\begin{equation} \label{eq15}    \mathbf{Y}=\mathbf{F}_{N_{\text{Rx}}^{\text{p}}}\mathbf{Y}_{\text{p},\text{ns}}^i|_{0,0}\mathbf{F}_{N_{\text{Tx}}^{i}}\in\mathbb{C}^{N_{\text{Rx}}^{\text{p}}\times{ N_{\text{Tx}}^{i}}},
\end{equation}
where $\mathbf{Y}$ denotes a 2D AoA-AoD profile 
and $\mathbf{F}_N \in \mathbb{C}^{N \times N}$ 
represents a DFT matrix. 
According to \hyperref[theorem2]{\textbf{Theorem 2}}, 
we obtain the estimation results
$\tilde{\phi}_{\text{p},\text{ns}}^i$ and 
$\tilde{\theta}_{i,\text{ns}}$, 
which narrow the searching range and reduce the computational
complexity of 2D-MUSIC method. 
The updated search intervals are utilized for 
the fine estimation stage.
\begin{theorem}
    \label{theorem2}
    When searching the peak value of $\mathbf{Y}$, 
    the corresponding indices for the AoA-axis and AoD-axis 
    are $\tilde{\mu}_\phi$ and $\tilde{\mu}_\theta$, respectively. 
    Therefore, the rough estimation results of AoA and AoD are
    \begin{equation} \label{eq16}
        \begin{cases}
        \begin{aligned}
        \tilde{\phi}_{\text{p},\text{ns}}^i &=
        \arcsin\left(\dfrac{\lambda\tilde{\mu}_\phi}{d_\text{r} N_{\text{Rx}}^{\text{p}}}\right),\\ \tilde{\theta}_{i,\text{ns}} &=\arcsin\left(\dfrac{\lambda\tilde{\mu}_\theta}{d_\text{r}N_{\text{Tx}}^{i}}\right). 
        \end{aligned}
        \end{cases}
    \end{equation}
    Additionally, the resolution of rough estimation is
    \begin{equation} \label{eq17}
        \begin{cases}
        \begin{aligned}
        \Delta\tilde{\phi}_{\text{p},\text{ns}}^i &=
        \arcsin\left(\dfrac{\lambda}{d_\text{r} N_{\text{Rx}}^{\text{p}}}\right),\\ \Delta\tilde{\theta}_{i,\text{ns}} &=\arcsin\left(\dfrac{\lambda}{d_\text{r}N_{\text{Tx}}^{i}}\right).
        \end{aligned}
        \end{cases}
    \end{equation}
\end{theorem}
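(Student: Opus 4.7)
The plan is to start from equation (\ref{eq13}) evaluated at the ``initial'' index $(m,n)=(0,0)$, where every exponential depending on $m$ or $n$ (Doppler, CFO, TO and delay) collapses to unity. Up to noise, the all-antenna information matrix then reduces to a rank-one outer product
\begin{equation}
\mathbf{Y}_{\text{p},\text{ns}}^i|_{0,0} \approx b_{\text{p},\text{ns}}^i\,\mathbf{a}_{\text{Rx},\text{p}}\!\left(\phi_{\text{p},\text{ns}}^i\right)\mathbf{a}_{\text{Tx}}^{\text{T}}\!\left(\theta_{i,\text{ns}}\right),
\end{equation}
so that AoA information lives entirely along the columns and AoD information entirely along the rows. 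This separation is precisely the orthogonality observation made just before the theorem, and it justifies treating the two DFTs in (\ref{eq15}) independently.

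Next, I would invoke the standard DFT-of-a-discrete-complex-exponential identity. Each entry of $\mathbf{a}_{\text{Rx},\text{p}}(\phi_{\text{p},\text{ns}}^i)$ is $e^{j k\cdot 2\pi (d_\text{r}/\lambda)\sin(\phi_{\text{p},\text{ns}}^i)}$ for $k=1,\ldots,N_{\text{Rx}}^{\text{p}}$, i.e.\ a length-$N_{\text{Rx}}^{\text{p}}$ tone of normalized frequency $(d_\text{r}/\lambda)\sin(\phi_{\text{p},\text{ns}}^i)$. Left-multiplication by $\mathbf{F}_{N_{\text{Rx}}^{\text{p}}}$ therefore produces a Dirichlet kernel along the AoA-axis whose peak bin $\tilde{\mu}_\phi$ satisfies
\begin{equation}
\frac{\tilde{\mu}_\phi}{N_{\text{Rx}}^{\text{p}}} = \frac{d_\text{r}}{\lambda}\sin\!\left(\tilde{\phi}_{\text{p},\text{ns}}^i\right).
\end{equation}
Solving for the angle and taking $\arcsin$ yields the first line of (\ref{eq16}). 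An identical argument applied column-wise to $\mathbf{a}_{\text{Tx}}^{\text{T}}(\theta_{i,\text{ns}})$ through right-multiplication by $\mathbf{F}_{N_{\text{Tx}}^{i}}$ gives the second line. Because the two DFTs act on orthogonal directions of the outer product, the joint peak of $\mathbf{Y}$ is simply the product of the two one-dimensional peaks and can be searched for independently.

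Finally, the resolution claims in (\ref{eq17}) follow by setting $\tilde{\mu}_\phi=1$ (respectively $\tilde{\mu}_\theta=1$) in the same relation, since one DFT bin is the smallest distinguishable change in normalized spatial frequency. The main obstacle I expect is a bookkeeping one rather than a conceptual one: the Dirichlet-kernel peak is generally off-grid, so $\tilde{\mu}_\phi$ is only the nearest bin, and one should verify that this quantization in $\sin\phi$-space translates cleanly through the $\arcsin$ to yield the stated angular resolution (and, implicitly, that $|\sin\phi|\le 1$ places the peak inside the unambiguous DFT range when $d_\text{r}\le\lambda/2$). With that caveat acknowledged, the rest of the argument is a direct read-off of peak locations from the 2D-FFT output.
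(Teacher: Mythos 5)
Your proposal is correct and follows essentially the same route as the paper: exploit the rank-one outer-product structure of $\mathbf{Y}_{\text{p},\text{ns}}^i|_{0,0}$ so the 2D-FFT separates into two independent 1D DFTs of spatial tones, read off the peak bin via $\tilde{\mu}_\phi = N_{\text{Rx}}^{\text{p}} d_\text{r}\sin(\tilde{\phi}_{\text{p},\text{ns}}^i)/\lambda$ (the paper writes this with a floor, matching your nearest-bin caveat), invert through $\arcsin$ for (\ref{eq16}), and obtain (\ref{eq17}) from the unit bin spacing. Your explicit acknowledgment of the off-grid quantization and the nonlinearity of $\arcsin$ is a point the paper's proof glosses over, but it does not change the argument.
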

\begin{proof}
To perform the DFT on the rows and columns of  $\mathbf{Y}_{\text{p},\text{ns}}^i|_{0,0}$, the first row of $\mathbf{Y}_{\text{p}}^i$ is denoted by $\mathbf{r}_{1}\big(k,\theta_{i,\text{ns}}\big)$ and the first column of $\mathbf{Y}_{\text{p},\text{ns}}^i|_{0,0}$ is denoted by $\mathbf{c}_{1}\big(J,\phi_{\text{p},\text{ns}}^{i}\big)$, where $k \in \{1,2,\cdots,N_{\text{Tx}}^{i}\}$ and $J \in \{1,2,\cdots,N_{\text{Rx}}^{\text{p}}\}$ are the indices of transmit antennas in the $i$-th TBS and  received antennas in PBS, respectively. Thus, the above operation is expressed as
\begin{equation} \label{apen_eq4}
    \begin{aligned}
&\mathbf{a}_{1}\left(\mu_{\phi}\right) =\text{DFT}\left[\mathbf{c}_1\left(J,\phi_{\text{p},\text{ns}}^i\right)\right]  \\
&=\sum_{J=1}^{N_\text{Rx}^\text{p}}\mathbf{c}_{1}\left(J,\phi_{\text{p},\text{ns}}^{i}\right)e^{-j\frac{2\pi}{N_\text{Rx}^\text{p}}J\mu_{\phi}} \\
&=\sum_{J=1}^{N_\text{Rx}^\text{p}}\left[b_{\text{p},\text{ns}}^{i}e^{j\frac{2\pi d_\text{r}}\lambda J \sin\left(\phi_{\text{p},\text{ns}}^i\right)}+\mathbf{z}_\text{ns}^i(J)\right]e^{-j\frac{2\pi}{N_\text{Rx}^\text{p}}J\mu_\phi},\\&\quad\quad \mu_\phi=1,2,\cdots,N_\text{Rx}^\text{p},
\end{aligned}
\end{equation}
and
\begin{equation} \label{apen_eq5}
    \begin{aligned}
&\mathbf{b}_{1}\left(\mu_{\theta}\right) =\text{DFT}\left[\mathbf{r}_{1}\left(k,\theta_{i,\text{ns}} \right ) \right ]  \\
&=\sum_{k=1}^{N_{\text{Tx}}^{i}}\mathbf{r}_1\left(k,\theta_{i,\text{ns}}\right)e^{-j\frac{2\pi}{N_{\text{Tx}}^{i}} k\mu_{\theta}} \\
&=\sum_{k=1}^{N_\text{Tx}^{i}}\left[b_{\text{p},\text{ns}}^{i}e^{jk\frac{2\pi d_\text{r}}\lambda\sin\left(\theta_{i,\text{ns}}\right)}+\mathbf{z}_\text{ns}^i(k)\right]e^{-j\frac{2\pi}{N_\text{Tx}^{i}}k{\mu_\theta}},\\&\quad\quad \mu_\theta=1,2,\cdots,N_\text{Tx}^{{i}},
\end{aligned}
\end{equation}
where $\mathbf{a}_{1}\left(\mu_{\phi}\right)$ and $\mathbf{b}_{1}\left(\mu_{\theta}\right)$ are AoA and AoD profiles, respectively. When searching for a peak, the corresponding AoA and AoD index values are
\begin{equation} \label{apen_eq6}
    \begin{aligned}
\tilde{\mu}_{\phi} &=\left\lfloor\dfrac{N_\text{Rx}^\text{p}d_\text{r}\sin\left(\tilde{\phi}_{\text{p},\text{ns}}^i\right)}\lambda\right\rfloor, \\
\tilde{\mu}_{\theta} &=\left\lfloor\frac{N_{\text{Tx}}^{i}d_{\text{r}}\sin\left(\tilde{\theta}_{i,\text{ns}}\right)}\lambda\right\rfloor.
    \end{aligned}
\end{equation}
Thus the rough estimations of AoA and AoD are obtained in (\ref{eq16}). By observing (\ref{apen_eq6}), the index values of  $\tilde{\mu}_{\phi}$ and $\tilde{\mu}_{\theta}$ only take integer values, whose resolution is $1$. The resolution of the corresponding rough estimation is shown in (\ref{eq17}).
\end{proof}

\subsubsection{Fine estimation stage}

In the rough estimation stage, 
the estimation results of $\tilde{\phi}_{\text{p},\text{ns}}^i$ 
and $\tilde{\theta}_{i,\text{ns}}$ are obtained. 
In this stage, the fine estimation is performed. 
Specifically, we utilize $\tilde{\phi}_{\text{p},\text{ns}}^i$ 
and $\tilde{\theta}_{i,\text{ns}}$ 
to update the searching range of 2D-MUSIC method, 
reducing computational complexity and obtaining 
high-accuracy estimation.

According to (\ref{apen_eq6}), 
only integers can be obtained in the rough estimation of 
$\tilde{\mu}_{\phi}$ and $\tilde{\mu}_{\theta}$. 
If the peak index values near the 
real AoA and AoD values are decimals, 
the real AoA and AoD values are around the estimated 
values in the rough estimation. 
Therefore, the searching range can be reduced to
\begin{equation} \label{eq18}   \begin{aligned}\phi_\text{search}^i=&\left[\arcsin\left(\frac{\lambda\left(\tilde{\mu}_\phi-1\right)}{d_\text{r}N_\text{Rx}^\text{p}}\right),\quad\arcsin\left(\frac{\lambda\left(\tilde{\mu}_\phi+1\right)}{d_\text{r}N_\text{Rx}^\text{p}}\right)\right],\\\theta_\text{search}^i=&\left[\arcsin\left(\frac{\lambda\left(\tilde{\mu}_\theta-1\right)}{d_\text{r}N_\text{Tx}^{i}}\right),\quad\arcsin\left(\frac{\lambda\left(\tilde{\mu}_\theta+1\right)}{d_\text{r}N_\text{Tx}^{i}}\right)\right].\end{aligned}
\end{equation}
For simplicity, (\ref{eq13}) can be expressed as
\begin{equation}  \label{eq19}
     \begin{aligned}
    &\mathbf{x}_{\text{p},\text{ns}}^i\left(m,n\right) \\ &=\mathbf{a}_{\text{Rx},\text{p}}\left(\phi_{\text{p},\text{ns}}^i\right)\otimes\mathbf{a}_{\text{Tx}}\left(\theta_{i,\text{ns}}\right)\alpha(m,n)+\mathbf{n}(m,n),
      \end{aligned}
\end{equation}
where $\alpha(m,n)$ denotes complex data 
and $\mathbf{n}(m,n)$ is AWGN. 
Then, the 2D-MUSIC method \cite{zhang2010direction} is 
applied with the updated searching ranges 
in fine estimation 
to obtain the final AoA estimation 
$\hat{\phi}_{\text{p},\text{ns}}^i$ and AoD estimation 
$\hat{\theta}_{i,\text{ns}}^i$. 
The 2D-MUSIC method is applied in this paper as follows.
\begin{itemize}
    \item \textbf{Step 1:}
    The $i$-th echo NLoS signal in (\ref{eq19}) is applied 
    to obtain the covariance matrix $\hat{\mathbf{R}}_{\mathbf{y}_{\text{p},\text{ns}}^i}$.
    \begin{equation} \label{eq20}
     {\fontsize{10}{10} \hat{\mathbf{R}}_{\mathbf{y}_{\text{p},\text{ns}}^i}  =\frac1{M_{\text{sym}}N_{\text{ c}}}\sum_{m=1}^{M_{\text{sym}}}\sum_{n=1}^{N_{\text{c}}}\mathbf{x}_{\text{p},\text{ns}}^{i}\left(m,n\right){\left[\mathbf{x}_{\text{p},\text{ns}}^{i}\left(m,n\right)\right]}^{\text{H}}.
      }
    \end{equation}
    
    \item \textbf{Step 2:} 
    The covariance matrix 
    $\hat{\mathbf{R}}_{\mathbf{y}_{\text{p},\text{ns}}^i}$ 
    undergoes an eigenvalue decomposition (EVD) to obtain 
    \cite{zhang2010direction}
    \begin{equation}   \label{eq21} \operatorname{eig}\left(\hat{\mathbf{R}}_{\mathbf{y}_{\text{p},\text{ns}}^i}\right)=\mathbf{E}_{s}\mathbf{D}_{s}\mathbf{E}_{s}^{\text{H}}+\mathbf{E}_{n}\mathbf{D}_{n}\mathbf{E}_{n}^{\text{H}},
    \end{equation}
    where $\mathbf{D}_s$ and $\mathbf{D}_n$ denote 
    the diagonal array of signal and noise, respectively, 
    while $\mathbf{E}_s$ and $\mathbf{E}_n$ denote the 
    sub-spaces of signal and noise, respectively.
    
    \item \textbf{Step 3:} 
    Construct the 2D-MUSIC spatial spectral function 
    $f_{\text{2d-music}}(\phi_{\text{s}},\theta_{\text{s}})$ 
    \cite{zhang2010direction}
    \begin{equation} \label{eq22}
        \begin{aligned}
            &f_{{\text{2d-music}} }(\phi_{\text{s}},\theta_{\text{s}}) \\ & =\frac1{\left[\mathbf{a}_{r}(\phi_{\text{s}})\otimes\mathbf{a}_{t}(\theta_{\text{s}})\right]^{\text{H}}\mathbf{E}_{n}\mathbf{E}_n^{\text{H}}\left[\mathbf{a}_{r}(\phi_{\text{s}})\otimes\mathbf{a}_{t}(\theta_{\text{s}})\right]},
        \end{aligned}
    \end{equation}
    where 
    \begin{equation} \label{eq23}
    { \fontsize{9}{10}
    \begin{gathered}
    \begin{aligned}
     &\mathbf{a}_{r}\left(\phi_{\text{s}}\right) = \\ &\left[\left.e^{j2\pi(\frac{d_{\text{r}}}{\lambda})\sin(\phi_{\text{s}})},e^{j4\pi(\frac{d_{\text{r}}}{\lambda})\sin(\phi_{\text{s}})},\cdots,e^{j\left(N_\text{Rx}^\text{p}\right)2\pi(\frac{d_{\text{r}}}{\lambda})\sin(\phi_{\text{s}})}\right]^\text{T},\right. \\
    & \mathbf{a}_{t}\left(\theta_{\text{s}}\right) = \\ &\left[e^{j2\pi(\frac{d_{\text{r}}}{\lambda})\sin(\theta_{\text{s}})},e^{4\pi(\frac{d_{\text{r}}}{\lambda})\sin(\theta_{\text{s}})},\cdots,e^{j\left(N_\text{Tx}^{i}\right)2\pi(\frac{d_{\text{r}}}{\lambda})\sin(\theta_{\text{s}})}\right]^\text{T}. 
     \end{aligned}
    \end{gathered}}
    \end{equation}
    $\phi_{\text{s}}$ and $\theta_{\text{s}}$ denote the 
    searching values of AoA and AoD, respectively.
 
    \item \textbf{Step 4:} 
     The peak of $f_{\text{2d-music}}(\phi_{\text{s}},\theta_{\text{s}})$ 
     is searched and the corresponding index values for 
     the axes are the final estimation results of AoA and AoD .
\end{itemize}

\hyperref[tab2]{\textbf{Algorithm 1}} 
provides a low-complexity joint AoA and AoD estimation method in Section \ref{se3-1}.

\begin{table}[!ht]
\centering
\label{tab2}
\resizebox{\linewidth}{!}{
\setlength{\arrayrulewidth}{1.5pt}
\begin{tabular}{rllll}
\hline
\multicolumn{5}{l}{\textbf{Algorithm 1:} Low-complexity joint AoA and AoD estimation}   \\ \hline
\multirow{-4}{*}{\textbf{Input:} }               & \multicolumn{4}{l}{\begin{tabular}[c]{@{}l@{}}All-antenna information matrix $\mathbf{Y}_{\text{p},\text{ns}}^i|_{0,0}$;\\ The number of OFDM symbols $M_{\text{sym}}$;\\ The number of subcarriers $N_{\text{c}}$;\\ The searching step size of 2D-MUSIC method $\varpi$. \end{tabular}} \\
\textbf{Output:}               & \multicolumn{4}{l}{The AoA and AoD estimation results $\hat{\phi}_{\text{p},\text{ns}}^i$ and $\hat{\theta}_{i,\text{ns}}$.} \\ 
\multicolumn{5}{l}{\textbf{Rough estimation stage:}} \\
\multirow{-1}{*}{1:}        & \multicolumn{4}{l}{Perform 2D-FFT on $\mathbf{Y}_{\text{p},\text{ns}}^i|_{0,0}$, as shown in (\ref{eq15}),} \\ & \multicolumn{4}{l}{the 2D AoA-AoD profile $\mathbf{Y}$ is obtained;}  \\
2:      & \multicolumn{4}{l}{$\textbf{for}$ ($\mu_\phi$ in $N_{\text{Rx}}^{\text{p}}$) and ($\mu_\theta$ in $N_{\text{Tx}}^{i}$) $\textbf{do}$}   \\
3:        & \multicolumn{4}{l}{$\hspace{1em}$ Get the maximum value of $\mathbf{Y}(\mu_\phi,\mu_\theta)$, denoted by $\mathbf{Y}(\tilde{\mu_\phi},\tilde{\mu_\theta})$;}  \\
4:        & \multicolumn{4}{l}{$\textbf{end}$ $\textbf{for}$}  \\
\multirow{-2}{*}{5:}         & \multicolumn{4}{l}{\begin{tabular}[c]{@{}l@{}} The AoA $\tilde{\phi}_{\text{p},\text{ns}}^i$ and AoD $\tilde{\theta}_{i,\text{ns}}$ are obtained via (\ref{eq16}) \\and the peak indices $\tilde{\mu_\phi}$ and $\tilde{\mu_\theta}$ of the 2D AoA-AoD profile; \end{tabular}}   \\
\multicolumn{5}{l}{\textbf{Fine estimation stage:}} \\
\multirow{-2}{*}{6:}     & \multicolumn{4}{l}{\begin{tabular}[c]{@{}l@{}}The updated searching ranges are obtained via (\ref{eq18}),\\ $\tilde{\mu_\phi}$, and $\tilde{\mu_\theta}$;\end{tabular}}  \\
7:                             & \multicolumn{4}{l}{The covariance matrix $\hat{\mathbf{R}}_{\mathbf{y}_{\text{p},\text{ns}}^i}$ is obtained by (\ref{eq20});}  \\
8:                             & \multicolumn{4}{l}{The noise sub-space $\mathbf{E}_n$ is obtained by (\ref{eq21});}  \\
9:                            & \multicolumn{4}{l}{$\textbf{for}$ $\phi_{\text{s}}, \theta_{\text{s}}$ in $\phi_{\text{search}}^i, \theta_{\text{search}}^i$ with step size $\varpi$ $\textbf{do}$}   \\
\multirow{-2}{*}{10:}                           & \multicolumn{4}{l}{\begin{tabular}[c]{@{}l@{}}$\hspace{1em}$ Spatial spectral functions $f_{\text{2d-music}}(\phi_{\text{s}},\theta_{\text{s}})$ are obtained \\ $\hspace{1em}$ via noise sub-space and (\ref{eq22});\end{tabular}}   \\
11:                            & \multicolumn{4}{l}{$\hspace{1em}$ The maximum value of $f_{\text{2d-music}}(\phi_{\text{s}},\theta_{\text{s}})$ is obtained;}   \\
12:                            & \multicolumn{4}{l}{$\hspace{1em}$  The corresponding AoA and AoD are obtained;}  \\
13:                            & \multicolumn{4}{l}{$\textbf{end}$ $\textbf{for}$ }    \\
\multirow{-2}{*}{14:}                            & \multicolumn{4}{l}{\begin{tabular}[c]{@{}l@{}}The final AoA and AoD in fine estimation are obtained, \\ denoted by $\hat{\phi}_{\text{p},\text{ns}}^i$ and $\hat{\theta}_{i,\text{ns}}$.\end{tabular}}  \\
\hline
\end{tabular}}
\end{table}

\subsection{Phase Compensation} \label{sec3-B}

In this subsection, 
the phase shifts in multiple antennas  
are compensated and 
coherent accumulation is performed to improve SNR. 
Specifically, 
the NLoS path signals and the LoS path signals
are processed.

\subsubsection{NLoS path signals}

When estimating the AoA $\hat{\phi}_{\text{p},\text{ns}}^i$ and the
AoD $\hat{\theta}_{i,\text{ns}}$, 
the phase shifts caused by AoA and AoD of 
the NLoS signal in (\ref{eq10}) can be eliminated.  
The NLoS part of the $i$-th baseband echo modulation symbol in the $J$-th receiving 
antenna on the $n$-th subcarrier and the $m$-th OFDM symbol time 
after compensation is
\begin{equation} \label{eq24}
    \begin{aligned}   p_{\text{p},\text{ns}}^i\left(J,m,n\right) & = 
 \kappa_{i,\text{ns}}b_{\text{p},\text{ns}}^{i}e^{j2\pi mT\left[f_{i,\text{p}}+\xi_{f,i}(m)\right]}\\ &\quad \times e^{-j2\pi n\Delta f\left[\tau_{\text{p},\text{ns}}^{i}+\xi_{\tau,i}(m)\right]}+z_\text{ns,offset}^i,
    \end{aligned}
\end{equation}
where $\kappa_{i,\text{ns}}$ denotes a real number generated by phase compensation operation, and $z_\text{ns,offset}^i$ is the noise.
According to (\ref{eq24}), 
the signals on multi-antenna do not change with $J$ and 
have the same phase, 
which can be coherently accumulated to improve SNR. 
Therefore, the NLoS part of the $i$-th baseband echo modulation symbol on 
the $n$-th subcarrier and the $m$-th OFDM symbol 
after coherent accumulation can be expressed as
\begin{equation} \label{eq25}
d_{\text{p},\text{ns}}^i\left(m,n\right)=\frac1{N_\text{Rx}^\text{p}}\sum_{J=1}^{N_\text{Rx}^\text{p}}p_{\text{p},\text{ns}}^i\left(J,m,n\right).
\end{equation}

The matrix form of (\ref{eq25}) on 
$N_\text{c}$ subcarriers and $M_\text{sym}$ OFDM symbols is shown in (\ref{eq26}). 
$\mathbf{D}_{\text{p},\text{ns}}^{i}\in\mathbb{C}^{N_{\text{c}}\times M_{\text{sym}}}$ 
is referred to as delay-Doppler NLoS information matrix, and $\mathbf{Z}_\text{ns}^i$ is the noise matrix.

\begin{figure*} 
   \begin{equation}
    \centering
    \label{eq26}
    \mathbf{D}_{\text{p},\text{ns}}^i=\left[
    \begin{array}{ccccc}
    d_{\text{p},\text{ns}}^i\left(1,1\right) & \cdots & d_{\text{p},\text{ns}}^i\left(1,m\right) & \cdots & d_{\text{p},\text{ns}}^i\left(1,M_{\text{sym}}\right) \\
     \vdots & \ddots & \vdots & \ddots & \vdots \\
    d_{\text{p},\text{ns}}^i\left(n,1\right)& \cdots & d_{\text{p},\text{ns}}^i\left(n,m\right) & \cdots & d_{\text{p},\text{ns}}^i\left(n,M_{\text{sym}}\right) \\
     \vdots & \ddots & \vdots & \ddots & \vdots \\
    d_{\text{p},\text{ns}}^i\left(N_\text{c},1\right)& \cdots & d_{\text{p},\text{ns}}^i\left(N_\text{c},m\right) & \cdots & d_{\text{p},\text{ns}}^i\left(N_\text{c},M_{\text{sym}}\right)\\
    \end{array} \right]+\mathbf{Z}_\text{ns}^i.
    \end{equation} 
\end{figure*}

\subsubsection{LoS path signals}
In the LoS path signals, 
AoA, AoD, and delay are known and can be directly eliminated, enabling coherent accumulation of the LoS path signals after removing these known parameters, including $\theta_{i,\text{s}}$ and $\tau_{i,\text{s}}$.
Therefore, the LoS part of the $i$-th baseband echo modulation symbol 
on the $n$-th subcarrier and the $m$-th OFDM symbol 
after coherent accumulation is expressed as
\begin{equation} \label{eq27}
    \begin{aligned} &p_{\text{p},\text{s}}^i\left(J,m,n\right) \\ &=\kappa_{i,\text{s}}b_{\text{p},\text{s}}^{i}e^{j2\pi mT\xi_{f,i}(m)}e^{-j2\pi n\Delta f\xi_{\tau,i}(m)}+z_\text{s,offset}^i,
    \end{aligned}
\end{equation}
where $\kappa_{i,\text{s}}$ denotes a real number generated by phase compensation operation, and $z_\text{s,offset}^i$ is the noise. Similar to NLoS path signals, 
the result of the coherent accumulation of multi-antenna data 
in LoS path is referred to as the delay-Doppler LoS information matrix 
$\mathbf{D}_{\text{p},\text{s}}^i \in \mathbb{C}^{N_{\text{c}}\times M_{\text{sym}}}$.

\subsection{NLCC method} \label{se3-C}

When $\mathbf{D}_{\text{p},\text{ns}}^{i}$ is utilized for 
the estimation of location and velocity of target, 
the CFO and TO lead to the ambiguity \cite{ni2021uplink}. 
Therefore, NLCC method is proposed to mitigate the CFO and TO.

Observing $\mathbf{D}_{\text{p},\text{ns}}^i$ and 
$\mathbf{D}_{\text{p},\text{s}}^i$ obtained in Section \ref{sec3-B}, 
the NLCC method is performed to mitigate the CFO and TO 
since both $\mathbf{D}_{\text{p},\text{ns}}^i$ and 
$\mathbf{D}_{\text{p},\text{s}}^i$ contain the same CFO and TO. 
With the cross-correlation on the corresponding elements of $\mathbf{D}_{\text{p},\text{ns}}^i$ and $\mathbf{D}_{\text{p},\text{s}}^i$, 
the delay-Doppler information matrix $\mathbf{D}_\text{p}^i \in \mathbb{C}^{N_{\text{c}}\times M_{\text{sym}}}$ is obtained, 
as shown in (\ref{eq28}), 
where $\mathbf{N}_\text{S}^i$ is the noise matrix.
\begin{figure*}[ht]
    \begin{equation} \label{eq28}
        \begin{aligned}
\mathbf{D}_{\text{p}}^{i}&=\mathbf{D}_{\text{p},\text{ns}}^{i}\circ\left(\mathbf{D}_{\text{p},\text{s}}^{i}\right)^{*} \\
&=\kappa_{i,\text{s}}\kappa_{i,\text{ns}}b_{\text{p},\text{s}}^{i}b_{\text{p},\text{ns}}^{i} \\
&\quad \times\left[ 
       \begin{array}{ccccc}
        e^{j2\pi Tf_{i,\text{p}}}e^{-j2\pi\Delta f\tau_{\text{p},\text{ns}}^{i}} & \cdots & e^{j2\pi mTf_{i,\text{p}}}e^{-j2\pi\Delta f\tau_{\text{p},\text{ns}}^{i}}  & \cdots & e^{j2\pi M_{\text{sym}} Tf_{i,\text{p}}}e^{-j2\pi\Delta f\tau_{\text{p},\text{ns}}^{i}} \\
        \vdots & \ddots & \vdots & \ddots & \vdots \\
        e^{j2\pi Tf_{i,\text{p}}}e^{-j2\pi n \Delta f\tau_{\text{p},\text{ns}}^{i}} &  \cdots & e^{j2\pi mTf_{i,\text{p}}}e^{-j2\pi n \Delta f\tau_{\text{p},\text{ns}}^{i}} & \cdots & e^{j2\pi M_{\text{sym}} Tf_{i,\text{p}}}e^{-j2\pi n \Delta f\tau_{\text{p},\text{ns}}^{i}} \\
        \vdots & \ddots & \vdots & \ddots & \vdots \\
        e^{j2\pi Tf_{i,\text{p}}}e^{-j2\pi N_{\text{c}} \Delta f\tau_{\text{p},\text{ns}}^{i}} & \cdots & e^{j2\pi mTf_{i,\text{p}}}e^{-j2\pi N_{\text{c}} \Delta f\tau_{\text{p},\text{ns}}^{i}} & \cdots & e^{j2\pi M_{\text{sym}} Tf_{i,\text{p}}}e^{-j2\pi N_{\text{c}} \Delta f\tau_{\text{p},\text{ns}}^{i}}
       \end{array}\right]+\mathbf{N}_\text{S}^i.
       \end{aligned}
    \end{equation}
    {\noindent} \rule[-10pt]{18cm}{0.1em}
\end{figure*}

When CFO and TO are eliminated, 
the delay-Doppler information matrices undergo a coherent compression 
operation to reduce the data processing load while improving SNR. 
Section~\ref{se3-D} describes the coherent compression operation in detail.

\subsection{Coherent Compression Operation}\label{se3-D}

Observing $\mathbf{D}_{\text{p}}^i$, 
the initial phase of each row vector or each column vector is not aligned.
The phase offset caused by delay remains constant across the elements within a subcarrier, 
while the Doppler effect maintains uniform phase shifts across the elements within an OFDM symbol.
Therefore, a coherent compression operation is proposed for processing $\mathbf{D}_{\text{p}}^i$ to obtain the range feature vectors 
and velocity feature vectors, which can reduce the data processing load while improving SNR.
Coherent compression refers to the inner product of the row or column vectors of a matrix, thereby transforming the matrix into a vector. The feature vector is a vector where the phase shift (delay or Doppler) caused by the target parameter varies linearly along the vector.
The operational process unfolds as follows.

\subsubsection{Range feature vector}

For $\mathbf{D}_{\text{p}}^i$, 
the $n$-th row vector of $\mathbf{D}_{\text{p}}^i$ 
is denoted by $\mathbf{f}_{\text{p}}^{i,n}$.
To align the phase of each row vector, 
$\mathbf{f}_{\text{p}}^{i,1}$ is chosen as 
the reference row vector 
and the remaining $N_{\text{c}}-1$ row vectors are conjugated with 
$\mathbf{f}_{\text{p}}^{i,1}$ to obtain $N_{\text{c}}-1$ 
coherent row vectors. 
The $n' \in \left\{ 1,2,\cdots,N_{\text{c}}-1\right\}$-th 
coherent row vector is expressed as
\begin{equation} \label{eq29}
    \begin{aligned}
&\mathbf{f}_{\text{p},1}^{i,n'} \in\mathbb{C}^{1\times M_{\text{sym}}} =\mathbf{f}_{\text{p}}^{i,n'+1}\circ\left(\mathbf{f}_{\text{p}}^{i,1}\right)^*  
=\\ &\left[e^{-j2\pi n'\Delta f\tau_{\text{p},\text{ns}}^{i}},e^{-j2\pi n'\Delta f\tau_{\text{p},\text{ns}}^{i}},\cdots,e^{-j2\pi n'\Delta f\tau_{\text{p},\text{ns}}^{i}}\right]+\mathbf{n}_\text{S}^i,
\end{aligned}
\end{equation}
where $\mathbf{n}_\text{S}^i$ denotes the noise vector.

It is evident that by summing the elements in $\mathbf{f}_{\text{p},1}^{i,n'}$, SNR can be improved, as the elements are identical, while the accompanying noise remains random.
Therefore, the elements in each of the $N_\text{c}-1$ 
coherent row vectors are summed to obtain the range feature vector 
$\mathbf{f}_{\text{p}}^i$, as shown in (\ref{eq30}).

\begin{figure*}
    \begin{equation}  \label{eq30}     \mathbf{f}_{\text{p}}^i=\frac1{M_{\text{sym}}}\left[\sum_{m=1}^{M_{\text{sym}}}\mathbf{f}_{\text{p},1}^{i,1}(m),\cdots, \sum_{m=1}^{M_{\text{sym}}}\mathbf{f}_{\text{p},1}^{i,n'}(m),\cdots,\sum_{m=1}^{M_{\text{sym}}}\mathbf{f}_{\text{p},1}^{i,N_{\text{c}}-1}(m)\right]^{\text{T}}\in\mathbb{C}^{(N_{\text{c}}-1)\times1}.
    \end{equation}
\end{figure*}

\subsubsection{Velocity feature vector}
Similar to the processing method of range feature vectors, 
the $m$-th column vector of $\mathbf{D}_{\text{p}}^i$ 
is denoted by $\mathbf{e}_{\text{p}}^{i,m}$. 
The $\mathbf{e}_{\text{p}}^{i,1}$ is chosen 
as the reference column vector and 
the remaining $M_{\text{sym}}-1$ column vectors are conjugated 
with $\mathbf{e}_{\text{p}}^{i,1}$ to obtain $M_{\text{sym}}-1$ 
coherent column vectors. 
The $m' \in \left\{ 1,2,\cdots,M_{\text{sym}}-1\right\}$-th 
coherent column vector is 
\begin{equation} \label{eq31}
    \begin{aligned}
&\mathbf{e}_{\text{p},1}^{i,m'} \in\mathbb{C}^{N_{\text{c}}\times1 } =\mathbf{e}_{\text{p}}^{i,m'+1}\circ\left(\mathbf{e}_{\text{p}}^{i,1}\right)^*  \\
&=\left[e^{j2\pi m'Tf_{i,\text{p}}},e^{j2\pi m'Tf_{i,\text{p}}}, \cdots,e^{j2\pi m'Tf_{i,\text{p}}}\right]^\text{T}+\mathbf{m}_\text{S}^i,
\end{aligned}
\end{equation}
where $\mathbf{m}_\text{S}^i$ is the noise vector.

Similarly, the elements of $\mathbf{e}_{\text{p},1}^{i,m'}$ 
are identical. 
Therefore, we perform an operation similar to 
$\mathbf{f}_{\text{p},1}^{i,n'}$ to obtain the velocity feature 
vector $\mathbf{e}_{\text{p}}^{i}$, 
as shown in (\ref{eq32}).

\begin{figure*}
    \begin{equation}  \label{eq32}     \mathbf{e}_{\text{p}}^i =\frac1{N_{\text{c}}}\left[\sum_{n=1}^{N_{\text{c}}}\mathbf{e}_{\text{p},1}^{i,1}(n),\cdots, \sum_{n=1}^{N_{\text{c}}}\mathbf{e}_{\text{p},1}^{i,m'}(n),\cdots,\sum_{n=1}^{N_{\text{c}}}\mathbf{e}_{\text{p},1}^{i,M_{\text{sym}}-1}(n)\right]\in\mathbb{C}^{1\times (M_{\text{sym}}-1)}.
    \end{equation}
    {\noindent} \rule[-10pt]{18cm}{0.1em}
\end{figure*}

The $I$ signals from TBSs are processed in Section \ref{se3} 
to obtain $I$ range feature vectors and $I$ velocity feature vectors, 
respectively. 
The above feature vectors are used to 
perform symbol-level fusion of multi-BS sensing information.

\section{Symbol-Level Fusion of Multi-BS Sensing Information} 
\label{se4}
In this section,
a symbol-level fusion method of multi-BS sensing information 
is proposed 
for cooperative passive sensing, 
including the location and absolute velocity estimation of target. 
Specifically, in the location estimation, 
the searching scope is first established.
Given that the target is located in the overlapping coverage of multiple TBSs, whose location and coverage are known, the searching scope is calculated. Then, the searching scope is gridded to construct the position searching scope matrix (in (\ref{eq33})). Utilizing the position searching scope matrix, the distance vector (in (\ref{eq35})) associated with the $i$-th TBS is obtained. Subsequently, the distance compensation operation is performed, whose value depends on the distance between the searching location and the PBS. The compensated distance vector (in (\ref{eq37})) is transformed into a delay matching matrix (in (\ref{eq38})) and is performed conjugate inner product with the corresponding $i$-th range feature vector, resulting in the $i$-th position profile (in (\ref{eq39})). The $I$ position profiles are coherently accumulated, and the index value corresponding to the peak value is the final location estimation of target. A similar process is carried out for absolute velocity estimation. Fig.~\ref{fig3} shows the process of symbol-level fusion method of multi-BS sensing information. A detailed description of the localization method and velocity estimation method is presented as follows.
\begin{figure}
    \centering
    \includegraphics[width=0.5\textwidth]{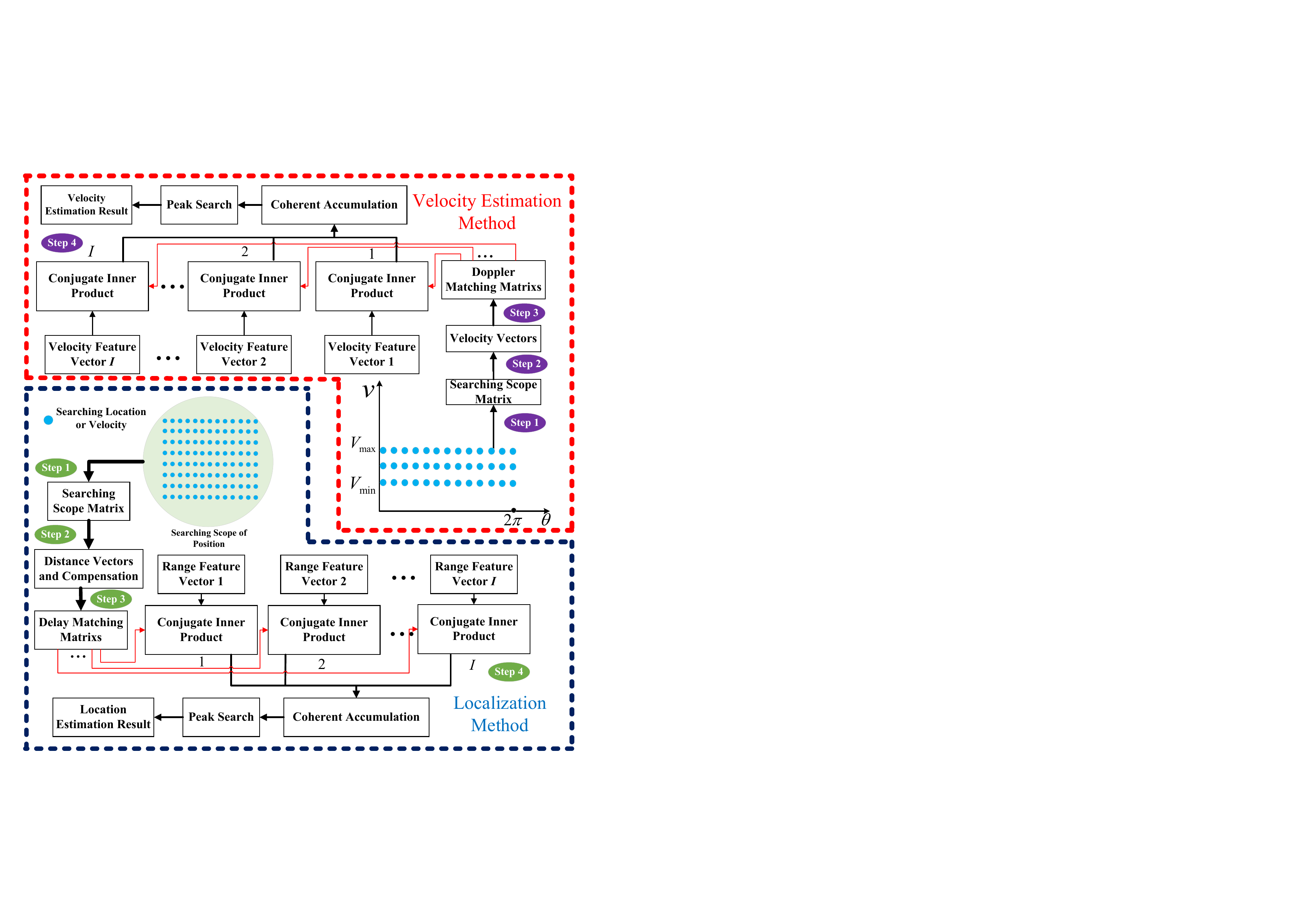}
    \caption{Flowchart of symbol-level fusion method of multi-BS sensing information}
    \label{fig3}
\end{figure}

\subsection{Localization Method}
The localization method is given in detail in this section. In Section \ref{se3}, we obtain $I$ range and velocity feature vectors. Meanwhile, the searching scope of target is obtained. Then, the searching scope is gridded to get the position searching scope matrix.
\subsubsection{Obtain position searching scope matrix} \label{se4-1-1}
The searching scope of length and width $Q\Delta M$ is gridded with a step size $\Delta M$ to construct a position searching scope matrix $\mathbf{P}_{\text{search}} \in \mathbb{C}^{Q \times Q}$.
\begin{equation} \label{eq33}
    \begin{aligned}
    &\mathbf{P}_{\text{search}}= \\ &\left[
    \begin{array}{cccc}
     \left(x_{\text{ra},1},y_{\text{ra},1}\right)  & \left(x_{\text{ra},1},y_{\text{ra},2}\right)  & \cdots & \left(x_{\text{ra},1},y_{\text{ra},Q}\right)  \\
     \left(x_{\text{ra},2},y_{\text{ra},1}\right)  & \left(x_{\text{ra},2},y_{\text{ra},2}\right)  & \cdots & \left(x_{\text{ra},2},y_{\text{ra},Q}\right)  \\
     \vdots & \vdots & \ddots & \vdots \\
     \left(x_{\text{ra},Q},y_{\text{ra},1}\right)  & \left(x_{\text{ra},Q},y_{\text{ra},2}\right)  & \cdots & \left(x_{\text{ra},Q},y_{\text{ra},Q}\right)  \\
    \end{array}
    \right],
    \end{aligned}
\end{equation}
where $\left(x_{\text{ra},p},y_{\text{ra},j}\right)$ indicates the coordinates within the searching scope with $p,j \in \left\{1,2,\cdots,Q\right\}$.

\subsubsection{Obtain distance vectors and compensation} \label{se4-1-2}
We assume that the coordinates of the $i$-th TBS are $\left(x^i,y^i\right)$, while transforming $\mathbf{P}_\text{search}$ into row vector form to obtain the distance searching vector $\mathbf{p}_{\text{search}}\in \mathbb{C}^{1\times Q^2}$.
\begin{equation} \label{eq34}
    \mathbf{p}_{\text{search}}=
    \left[\text{vec}(\mathbf{P}_{\text{search}})
    \right]^{\text{T}}.
\end{equation}
Then, the distance between each coordinate in $\mathbf{p}_{\text{search}}$ and the location of the $i$-th TBS is calculated to obtain the $i$-th distance vector $\mathbf{g}_{\text{r}}^i \in \mathbb{C}^{1 \times Q^2}$.
\begin{equation} \label{eq35}
    \mathbf{g}_{\text{r}}^i =\left[
    r_{i,1},r_{i,2},\cdots,r_{i,Q^2}\right],
\end{equation}
where $r_{i,1}=\sqrt{\left(x^i-x_{\text{ra},1}\right)^2+\left(y^i-y_{\text{ra},1}\right)^2}$.
However, $\mathbf{g}_{\text{r}}^i$ contains only the distance from the $i$-th TBS to the searching location, which needs to be compensated for the distance from the searching location to the PBS. Therefore, we calculate the distance between each coordinate in $\mathbf{p}_{\text{search}}$ and the PBS to obtain the distance compensation vector $\mathbf{g}_{\text{cop}} \in \mathbb{C}^{1 \times Q^2}$.
\begin{equation} \label{eq36}
    \mathbf{g}_{\text{cop}}=\left[
    r_{\text{p},1},r_{\text{p},2},\cdots,r_{\text{p},Q^2}\right].
\end{equation}
Then, the $I$ vectors $\mathbf{g}_{\text{r}}^i$ are summed with $\mathbf{g}_{\text{cop}}$ to get $I$ compensated distance vectors, where the $i$-th compensated distance vector is
\begin{equation} \label{eq37}    \mathbf{g}_{\text{r},\text{cop}}^{i}=\left[r_{\text{p},1}^{i},r_{\text{p},2}^{i},\cdots,r_{\text{p},Q^2}^{i}\right]\in \mathbb{C}^{1 \times Q^2},
\end{equation}
where $r_{{\text{p},p_\text{d}}}^{i}=r_{i,p_{\text{d}}}+r_{\text{p},p_{\text{d}}}$, $p_{\text{d}} \in \left\{1,2,\cdots,Q^2\right\}$.

\subsubsection{Obtain distance matching matrices} \label{se4-1-3}
According to the form of phase shifts generated by the delay along the frequency axis in \cite{sturm2011waveform}, we transform the $I$ compensated distance vectors into delay matching matrices, where the $i$-th delay matching matrix $\mathbf{G}_{\text{mat}}^i \in \mathbb{C}^{(N_{\text{c}}-1) \times Q^2}$ is shown in (\ref{eq38}). 
\begin{figure*}
    \begin{equation} \label{eq38}
        \mathbf{G}_{\text{mat}}^i= \left[
         \begin{array}{ccccc}
           e^{-j2\pi \Delta f \left(\dfrac{r_{\text{p},1}^{i}}{c}\right)}  & \cdots & e^{-j2\pi \Delta f \left(\dfrac{r_{\text{p},p_{\text{d}}}^{i}}{c}\right)} & \cdots &
           e^{-j2\pi \Delta f \left(\dfrac{r_{\text{p},Q^2}^{i}}{c}\right)} \\
           \vdots & \ddots & \vdots & \ddots & \vdots \\
           e^{-j2\pi n' \Delta f \left(\dfrac{r_{\text{p},1}^{i}}{c}\right)}  & \cdots & e^{-j2\pi n' \Delta f \left(\dfrac{r_{\text{p},p_{\text{d}}}^{i}}{c}\right)} & \cdots &
           e^{-j2\pi n' \Delta f \left(\dfrac{r_{\text{p},Q^2}^{i}}{c}\right)} \\
           \vdots & \ddots & \vdots & \ddots & \vdots \\
           e^{-j2\pi \left(N_{\text{c}}-1\right) \Delta f \left(\dfrac{r_{\text{p},1}^{i}}{c}\right)}  & \cdots & e^{-j2\pi \left(N_{\text{c}}-1\right) \Delta f \left(\dfrac{r_{\text{p},p_{\text{d}}}^{i}}{c}\right)} & \cdots &
           e^{-j2\pi \left(N_{\text{c}}-1\right) \Delta f \left(\dfrac{r_{\text{p},Q^2}^{i}}{c}\right)} \\
         \end{array}         
         \right].       
    \end{equation}
\end{figure*}

\subsubsection{Conjugate inner product}
After processing in Sections \ref{se4-1-1}, \ref{se4-1-2}, and \ref{se4-1-3}, we obtain $I$ delay matching matrices in (\ref{eq38}). In Section \ref{se3}, we obtain $I$ range feature vectors in (\ref{eq30}). In the following, we utilize $\{\mathbf{G}_{\text{mat}}^i\}_{i=1}^I$ and $\{\mathbf{f}_{\text{p}}^i\}_{i=1}^I$ to estimate the location of target.

The $i$-th $\mathbf{G}_{\text{mat}}^i$ and the corresponding $\mathbf{f}_{\text{p}}^i$ are performed conjugate inner product operation, which can be replaced by matrix multiplication. The $i$-th position profile $\mathbf{p}_i$ is shown in (\ref{eq39}), where $\tau_{\text{p},p_\text{d}}^{i} = r_{\text{p},p_\text{d}}^{i}/c$. Then, a coherent accumulation operation is carried out to $I$ position profiles to obtain the final position profile $\mathbf{p}_{\text{fn}}$.
\begin{figure*}
    \begin{equation} \label{eq39}
        \begin{aligned}
&\mathbf{p}_i=\left(\mathbf{f}_{\text{p}}^i\right)^{\text{H}}\mathbf{G}_{\text{mat}}^i \\
&=\left[\sum_{n'=1}^{N_{\text{c}}-1}e^{-j2\pi n' \Delta f\left(\tau_{\text{p},1}^{i}-\tau_{\text{p},\text{ns}}^{i}\right)},\sum_{n'=1}^{N_{\text{c}}-1}e^{-j2\pi n' \Delta f\left(\tau_{\text{p},2}^{i}-\tau_{\text{p},\text{ns}}^{i}\right)},\cdots,\sum_{n'=1}^{N_{\text{c}}-1}e^{-j2\pi n' \Delta f\left(\tau_{\text{p},Q^2}^{i}-\tau_{\text{p},\text{ns}}^{i}\right)}\right]\in\mathbb{C}^{1\times Q^2}.
\end{aligned}
    \end{equation}
     {\noindent} \rule[-10pt]{18cm}{0.1em} 
\end{figure*}
\begin{equation}\label{eq40}
    \mathbf{p}_\text{fn}=\frac1I\sum_{i=1}^I\mathbf{p}_i,
\end{equation}
the peak value of which is searched to obtain the peak index value $\tilde{p_{\text{d}}}$. The value of $\mathbf{p}_{\text{search}}(\tilde{p_{\text{d}}})$ is the final estimated location of target, denoted by $(\hat{x}_{\text{tar}},\hat{y}_{\text{tar}})$.

\hyperref[tab3]{\textbf{Algorithm 2}} demonstrates the localization method in the symbol-level fusion method of multi-BS sensing information.
\begin{table}[!ht]
\centering
\label{tab3}
\resizebox{0.48\textwidth}{!}{
\setlength{\arrayrulewidth}{1.5pt}
\begin{tabular}{rllll}
\hline
\multicolumn{5}{l}{\textbf{Algorithm 2:} Localization method}   \\ \hline
\multirow{-5}{*}{\textbf{Input:} }               & \multicolumn{4}{l}{\begin{tabular}[c]{@{}l@{}} $I$ Range feature vectors $\{\mathbf{f}_{\text{p}}^{i}\}_{i=1}^I$;\\ The position searching scope of the target;\\ The grid size $ \Delta M$;\\ The location of the $i$-th TBS $(x^i,y^i)$;\\ The location of PBS $(x^{\text{p}},y^{\text{p}})$.\end{tabular}} \\
\textbf{Output:}               & \multicolumn{4}{l}{The estimated location of target $(\hat{x}_{\text{tar}},\hat{y}_{\text{tar}})$.}\\ 
\multicolumn{5}{l}{\textbf{Stage 1: The aim is to obtain distance matching matrices.}}  \\
\multirow{-1}{*}{1:}        & \multicolumn{4}{l}{Obtain the searching vector in (\ref{eq34}) by} \\ & \multicolumn{4}{l}{ the position searching scope and $\Delta M$;}  \\
\multirow{-1}{*}{2:}        & \multicolumn{4}{l}{Obtain the distance compensation vector in (\ref{eq36}) by } \\
& \multicolumn{4}{l}{ the location of PBS $(x^{\text{p}},y^{\text{p}})$ and (\ref{eq34});}  \\
3:      & \multicolumn{4}{l}{$\textbf{for}$ $i$ in $I$ $\textbf{do}$}   \\
\multirow{-1}{*}{4:}        & \multicolumn{4}{l}{$\hspace{1em}$ Obtain the $i$-th distance vectors in (\ref{eq35}) by}\\
& \multicolumn{4}{l}{$\hspace{1em}$ the location of PBS $(x^i,y^i)$ and (\ref{eq34});}\\
\multirow{-1}{*}{5:}         & \multicolumn{4}{l}{$\hspace{1em}$ Obtain the $i$-th compensated distance vector in (\ref{eq37}) by}  \\
& \multicolumn{4}{l}{$\hspace{1em}$ (\ref{eq35}) and (\ref{eq36});}  \\
6:        & \multicolumn{4}{l}{$\hspace{1em}$ Obtain the $i$-th delay matching matrix in (\ref{eq38}) by (\ref{eq37});}  \\
7:       & \multicolumn{4}{l}{\textbf{end} \textbf{for}}   \\
\multicolumn{5}{l}{\textbf{Stage 2: The aim is to obtain the location of target.}} \\
8:                             & \multicolumn{4}{l}{Initialize a final position profile $\mathbf{p}_{\text{fn}}$; }  \\
9:                             & \multicolumn{4}{l}{\textbf{for} $i$ in $I$ \textbf{do}}  \\
10:                             & \multicolumn{4}{l}{$\hspace{1em}$ Obtain the $i$-th position profile in (\ref{eq39}) by}  \\ & \multicolumn{4}{l}{$\hspace{1em}$ the $i$-th range feature vector $\mathbf{f}_{\text{p}}^{i}$ and (\ref{eq38});}  \\
11:                             & \multicolumn{4}{l}{$\hspace{1em}$ Store the result in (\ref{eq39}) in $\mathbf{p}_{\text{fn}}$;}  \\
12:                            & \multicolumn{4}{l}{\textbf{end} \textbf{for}}   \\
13:                            & \multicolumn{4}{l}{Obtain the final position profile $\mathbf{p}_{\text{fn}}$=$\mathbf{p}_{\text{fn}}$/$I$;}   \\
14:                            & \multicolumn{4}{l}{Search the peak of $\mathbf{p}_{\text{fn}}$ to obtain peak index $\tilde{p_{\text{d}}}$;}   \\
15:                            & \multicolumn{4}{l}{Obtain the location of target $(\hat{x}_{\text{tar}},\hat{y}_{\text{tar}})$ by}  \\  & \multicolumn{4}{l}{the peak index $\tilde{p_{\text{d}}}$ and (\ref{eq34});}  \\
\hline
\end{tabular}}
\end{table}

\subsection{Velocity Estimation Method}
The velocity estimation method follows the same process as the localization method. The only difference is that no compensation is required because the derived Doppler frequency shift already includes the Doppler frequency shift from the TBS to target and from the target to PBS.

For the absolute velocity estimation of target, the scopes of the magnitude and angle of velocity are determined, denoted by $\left[V_\text{min}, V_\text{max}\right]$ and $\left[0, 2\pi\right]$, respectively. Then, the searching scope is gridded to obtain the velocity searching scope matrix, as shown in Fig.~\ref{fig3}.

\subsubsection{Obtain velocity searching scope matrix} \label{se4-B-1}
The searching scope of length $D \Delta \theta$ and width $S \Delta V$ is gridded with a step size of velocity's magnitude $\Delta V$ and a step size of velocity's direction $\Delta \theta$ to obtain a velocity searching scope matrix $\mathbf{V}_{\text{search}} \in \mathbb{C}^{S \times D}$.
\begin{equation} \label{eq41}
    {\footnotesize  \mathbf{V}_{\text{search}}=\left[
        \begin{array}{cccc}
         \left(v_{\text{ra},1},\theta_{\text{ra},1}\right)   & \left(v_{\text{ra},1},\theta_{\text{ra},2}\right) & \cdots &
         \left(v_{\text{ra},1},\theta_{\text{ra},D}\right)\\
          \left(v_{\text{ra},2},\theta_{\text{ra},1}\right)   & \left(v_{\text{ra},2},\theta_{\text{ra},2}\right) & \cdots &
         \left(v_{\text{ra},2},\theta_{\text{ra},D}\right)\\ 
         \vdots & \vdots & \ddots & \vdots  \\
         \left(v_{\text{ra},S},\theta_{\text{ra},1}\right)   & \left(v_{\text{ra},S},\theta_{\text{ra},2}\right) & \cdots &
         \left(v_{\text{ra},S},\theta_{\text{ra},D}\right)\\
        \end{array}\right],}
\end{equation}
where $\left(v_{\text{ra},s},\theta_{\text{ra},d}\right)$ represents the searching magnitude and angle of velocity with $s \in \{1,2,\cdots,S\}$ and $d \in \{1,2,\cdots,D\}$.

\subsubsection{Obtain velocity vectors and Doppler matrices} \label{se4-B-2}
$\mathbf{V}_{\text{search}}$ is transformed into a column vector to obtain a velocity searching vector $\mathbf{v}_{\text{search}} \in \mathbb{C}^{(S\times D) \times 1}$.
\begin{equation}  \label{eq42}
\mathbf{v}_{\text{search}}=\text{vec}
    \left(\mathbf{V}_{\text{search}}\right).
\end{equation}
Substituting the element from $\mathbf{v}_{\text{search}}$, AoA $\hat{\phi}_{\text{p},\text{ns}}^i$ and AoD $\hat{\theta}_{i,\text{ns}}$ obtained in Section \ref{se3-1},  into (\ref{eq9}) to obtain $I$ Doppler vectors, where the $i$-th Doppler vector is denoted by
\begin{equation} \label{eq42.2}
    \mathbf{s}_i=\left[f_{i,\text{p}}^{1}, f_{i,\text{p}}^{2}, \cdots, f_{i,\text{p}}^{S\times D}\right]^{\text{T}},
\end{equation}
where $f_{i,\text{p}}^{1}=\frac{-v_{\text{ra},1} f_\text{c}}c{\left[\cos\left(\theta_{\text{ra},1}-\hat{\phi}_{\text{p},\text{ns}}^i\right)+\cos\left(\theta_{\text{ra},1}-\hat{\theta}_{i,\text{ns}}\right)\right]}$.

According to the form of phase shifts generated by the Doppler frequency shift along the time axis in \cite{sturm2011waveform,wei2023carrier,liu2024carrier,ISAC_lht}, the $I$ Doppler vectors are transformed into $I$ Doppler matching matrices, where the $i$-th Doppler matching matrix $\mathbf{S}_{\text{mat}}^i \in \mathbb{C}^{(S\times D) \times ({M}_{\text{sym}}-1)}$ is expressed as
\begin{equation} \label{eq43}
{\fontsize{7}{8}\begin{aligned}
     \setlength{\arraycolsep}{1pt} 
    &\mathbf{S}_{\text{mat}}^i= \\& \left[
     \begin{array}{ccccc}
      e^{j2\pi T f_{i,\text{p}}^{1}}  & \cdots & e^{j2\pi m' T f_{i,\text{p}}^{1}} & \cdots & e^{j2\pi (M_{\text{sym}}-1) T f_{i,\text{p}}^{1}}\\    
      \vdots & \vdots & \ddots & \vdots \\
      e^{j2\pi T f_{i,\text{p}}^{z}}  & \cdots & e^{j2\pi m' T f_{i,\text{p}}^{z}} & \cdots & e^{j2\pi (M_{\text{sym}}-1) T f_{i,\text{p}}^{z}}\\ 
      \vdots & \vdots & \ddots & \vdots \\
      e^{j2\pi T f_{i,\text{p}}^{(S\times D)}}  & \cdots & e^{j2\pi m' T f_{i,\text{p}}^{(S\times D)}} & \cdots & e^{j2\pi (M_{\text{sym}}-1) T f_{i,\text{p}}^{(S\times D)}}
     \end{array}
     \right] ,
\end{aligned}}
\end{equation}
where $z \in \{1,2,\cdots,(S\times D)\}$.

\subsubsection{Conjugate inner product}
We obtain $I$ Doppler matching matrices in (\ref{eq43}) and $I$ velocity feature vectors in (\ref{eq32}). Then, they are utilized to estimate the absolute velocity of target.

For the $i$-th $\mathbf{S}_{\text{mat}}^i$ and the corresponding $\mathbf{e}_{\text{p}}^i$, the conjugate inner product operation is replaced by matrix multiplication. Then, the $i$-th velocity profile $\mathbf{v}_i$ after matrix multiplication is expressed in (\ref{eq45}).
\begin{figure*}
\begin{equation}\label{eq45}
    \begin{aligned}
&\mathbf{v}_i=\mathbf{S}_i^\text{mat}\left(\mathbf{e}_\text{p}^i\right)^\text{H} \\
&=\left[\sum_{m'=1}^{M_{\text{sym}}-1}e^{j2\pi m'T\left(f_{i,\text{p}}^{1}-f_{i,\text{p}}\right)},\sum_{m'=1}^{M_{\text{sym}}-1}e^{j2\pi m'T\left(f_{i,\text{p}}^{2}-f_{i,\text{p}}\right)},\cdots,\sum_{m'=1}^{M_{\text{sym}}-1}e^{j2\pi m'T\left(f_{i,\text{p}}^{(S\times D)}-f_{i,\text{p}}\right)}\right]^{\text{T}}\in\mathbb{C}^{(S\times D)\times1}.
\end{aligned}
\end{equation}
\end{figure*}

Then, the $I$ velocity profiles are coherently accumulated to obtain the final velocity profile $\mathbf{v}_{\text{fn}}$.
\begin{equation}\label{eq46}
    \mathbf{v}_\text{fn}=\frac1I\sum_{i=1}^I\mathbf{v}_i,
\end{equation}
the peak value of which is searched to obtain the peak index value $\tilde{z}$. The value of $\mathbf{v}_{\text{search}}(\tilde{z})$ is the final estimated absolute velocity of target, denoted by $(\hat{v},\hat{\theta})$.

\hyperref[tab4]{\textbf{Algorithm 3}} demonstrates the velocity estimation method in the symbol-level fusion method of multi-BS sensing information.
\begin{table}[ht]
\centering
\label{tab4}
\resizebox{0.5\textwidth}{!}{
\setlength{\arrayrulewidth}{1.5pt}
\begin{tabular}{rllll}
\hline
\multicolumn{5}{l}{\textbf{Algorithm 3:} Velocity estimation method}   \\ \hline
\multirow{-4}{*}{\textbf{Input:} }               & \multicolumn{4}{l}{\begin{tabular}[c]{@{}l@{}} $I$ Velocity feature vectors $\{\mathbf{e}_{\text{p}}^{i}\}_{i=1}^I$;\\ The velocity searching scope of the target;\\ The grid size $ \Delta \theta$ and $\Delta V$;\\ The estimated AoA $\hat{\phi}_{\text{p},\text{ns}}^i$ and AoD $\hat{\theta}_{i,\text{ns}}$. \end{tabular}} \\
\textbf{Output:}               & \multicolumn{4}{l}{The estimated absolute velocity of target $(\hat{v},\hat{\theta})$.}\\ 
\multicolumn{5}{l}{\textbf{Stage 1: The aim is to obtain velocity matching matrices.}}  \\
\multirow{-1}{*}{1:}        & \multicolumn{4}{l}{Obtain the searching vector in (\ref{eq42}) by}\\ & \multicolumn{4}{l}{the velocity searching scope, the grid size $ \Delta \theta$ and $\Delta V$;}  \\
2:      & \multicolumn{4}{l}{$\textbf{for}$ $i$ in $I$ $\textbf{do}$}   \\
\multirow{-1}{*}{3:}         & \multicolumn{4}{l}{$\hspace{1em}$ Obtain the $i$-th Doppler vector in (\ref{eq42.2}) by}\\
& \multicolumn{4}{l}{$\hspace{1em}$ the estimated AoA $\hat{\phi}_{\text{p},\text{ns}}^i$, AoD $\hat{\theta}_{i,\text{ns}}$, and (\ref{eq9});}\\
4:        & \multicolumn{4}{l}{$\hspace{1em}$ Obtain the $i$-th Doppler matching matrix in (\ref{eq43}) by}  \\ & \multicolumn{4}{l}{$\hspace{1em}$ (\ref{eq42.2});}  \\
5:       & \multicolumn{4}{l}{\textbf{end} \textbf{for}}   \\
\multicolumn{5}{l}{\textbf{Stage 2: The aim is to obtain the absolute velocity of target.}} \\
6:                             & \multicolumn{4}{l}{Initialize a final velocity profile $\mathbf{v}_{\text{fn}}$; }  \\
7:                             & \multicolumn{4}{l}{\textbf{for} $i$ in $I$ \textbf{do}}  \\
8:                             & \multicolumn{4}{l}{$\hspace{1em}$ Obtain the $i$-th velocity profile in (\ref{eq45}) by}  \\  & \multicolumn{4}{l}{$\hspace{1em}$ the $i$-th velocity feature vector $\mathbf{e}_{\text{p}}^{i}$ and (\ref{eq43});}  \\
9:                             & \multicolumn{4}{l}{$\hspace{1em}$ Store the result in (\ref{eq45}) in $\mathbf{v}_{\text{fn}}$;}  \\
10:                            & \multicolumn{4}{l}{\textbf{end} \textbf{for}}   \\
11:                            & \multicolumn{4}{l}{Obtain the final velocity profile $\mathbf{v}_{\text{fn}}$=$\mathbf{v}_{\text{fn}}$/$I$;}   \\
12:                            & \multicolumn{4}{l}{Search the peak of $\mathbf{v}_{\text{fn}}$ to obtain peak index $\tilde{z}$;}   \\
13:                            & \multicolumn{4}{l}{Obtain the absolute velocity of target $(\hat{v},\hat{\theta})$ by}  \\ & \multicolumn{4}{l}{ the peak index $\tilde{z}$ and (\ref{eq42}) ;}  \\
\hline
\end{tabular}}
\end{table}

\section{Performance Analysis of Multi-BS Cooperative Passive Sensing} \label{se5}
In this section, the computational complexity of the joint AoA and AoD estimation method and the traditional 2D-MUSIC method, as well as the SNR in the symbol-level fusion method of multi-BS sensing information, are analyzed.

\subsection{Complexity of Joint AoA and AoD Estimation Method}
\subsubsection{Joint AoA and AoD estimation method}
The computational complexity of 2D-MUSIC method for MIMO radar is derived in~\cite{zhang2010direction} as follows.
\begin{equation} \label{eq47}
    \mathcal{O}\{LM^2N^2+M^3N^3+\Theta^2[MN(MN-K)+MN-K]\},
\end{equation}
where $K$ and $L$ denote the number of targets and snapshots, respectively; $M$ and $N$ denote the number of transmitting and receiving antennas, respectively; $\Theta$ denotes the total time spent in searching a dimension, where ``dimension'' refers to the searching scope for an unknown parameter. Inspired by (\ref{eq47}), the computational complexity of the proposed method is derived as follows.

In the rough estimation stage, 2D-FFT is applied to estimate the AoA and AoD. The computational complexity of 2D-FFT is 
\begin{equation} \label{eq48}
    \mathcal{O}\left\{N_{\text{Rx}}^{\text{p}}N_{\text{Tx}}^{i}\left[\left(N_{\text{Rx}}^{\text{p}}\right)^2+\left(N_{\text{Tx}}^{i}\right)^2\right]\right\}.
\end{equation}
In the fine estimation stage, the searching scope is narrowed down as in (\ref{eq18}). Then, the narrowed searching scope is gridded and 
the searching values are sequentially substituted into (\ref{eq22}) to obtain the estimation of AoA and AoD. The grid sizes of AoA and AoD are expressed as follows. 
\begin{equation} 
\begin{aligned} \label{eq51}
&\phi_{\text{step}}=\frac{\arcsin\left(\frac{\lambda\left(\tilde{\mu}_\phi+1\right)}{d_\text{r}N_\text{Rx}^\text{p}}\right)-\arcsin\left(\frac{\lambda\left(\tilde{\mu}_\phi-1\right)}{d_\text{r}N_\text{Rx}^\text{p}}\right)}{{\epsilon_{\phi}}}, \\
&\theta_{\text{step}}=\frac{\arcsin\left(\frac{\lambda\left(\tilde{\mu}_{\theta}+1\right)}{d_{\text{r}}N_{\text{Tx}}^{i}}\right)-\arcsin\left(\frac{\lambda\left(\tilde{\mu}_{\theta}-1\right)}{d_{\text{r}}N_{\text{Tx}}^{i}}\right)}{\epsilon_{\theta}},
\end{aligned}
\end{equation}
where $\epsilon_{\phi}$ and $\epsilon_{\theta}$ denote the number of grids for the estimation of AoA and AoD, respectively.
Meanwhile, the searching complexity of 2D-MUSIC is related to the number of grids in searching scope. Therefore, the searching complexity of 2D-MUSIC is $\mathcal{O}\{\epsilon_{\phi}\epsilon_{\theta}\}$ and the total computational complexity of the fine estimation stage is
\begin{equation} \label{eq49}
    \begin{aligned}
     \mathcal{O}&\left\{M_\text{sym}N_\text{c}\left(N_\text{Rx}^\text{p}N_\text{Tx}^{i}\right)^2+\left(N_\text{Rx}^\text{p}N_\text{Tx}^{i}\right)^3 \right. \\  & \left. +\epsilon_{\phi}\epsilon_{\theta}\left[\left(N_\text{Rx}^\text{p}N_\text{Tx}^{i}+1\right)\left(N_\text{Rx}^\text{p}N_\text{Tx}^{i}-1\right)\right]\right\}.
    \end{aligned} 
\end{equation}

Thus, the total computational complexity of the low-complexity joint AoA and AoD estimation method is shown in (\ref{eq50}).
\begin{figure*}[htbp] 
    \begin{equation} \label{eq50}
        \mathcal{O}\left\{\underbrace{N_{\text{Rx}}^{\text{p}}N_{\text{Tx}}^{i}\left[\left(N_{\text{Rx}}^{\text{p}}\right)^2+\left(N_{\text{Tx}}^{i}\right)^2\right]}_{\text{Rough stage}}+ \underbrace{M_\text{sym}N_\text{c}\left(N_\text{Rx}^\text{p}N_\text{Tx}^{i}\right)^2+\left(N_\text{Rx}^\text{p}N_\text{Tx}^{i}\right)^3 +\epsilon_{\phi}\epsilon_{\theta}\left[\left(N_\text{Rx}^\text{p}N_\text{Tx}^{i}+1\right)\left(N_\text{Rx}^\text{p}N_\text{Tx}^{i}-1\right)\right]}_{\text{Fine stage}}  \right\}.
    \end{equation}  
    {\noindent} \rule[-10pt]{18cm}{0.1em}
\end{figure*}

\subsubsection{Traditional 2D-MUSIC method}
In deriving the computational complexity of the traditional 2D-MUSIC method, the grid sizes of the searching scope are set to $\phi_{\text{step}}$ and $\theta_{\text{step}}$ to ensure consistent estimation accuracy between traditional 2D-MUSIC method and proposed method.
Meanwhile, the searching scope of both AoA and AoD are $\left(0,\ \pi\right]$, and the searching complexity of the traditional 2D-MUSIC is
$\mathcal{O}\left\{\gamma_\phi\gamma_\theta\right\}$, where $\gamma_\phi=\pi/\phi_\text{step}$ and $\gamma_\theta=\pi/\theta_\text{step}$.

Therefore, the computational complexity of traditional 2D-MUSIC method is derived based on (\ref{eq47}) as
\begin{equation} \label{eq52}
\begin{aligned}
     \mathcal{O}&\left\{\underbrace{M_{\text{sym}}N_{\text{c}}\left(N_{\text{Rx}}^{\text{p}}N_{\text{Tx}}^{{i}}\right)^2}_{(\text{a})}+\underbrace{\left(N_{\text{Rx}}^{\text{p}}N_{\text{Tx}}^{{i}}\right)^3}_{(\text{b})} \right. \\ & \left. +\underbrace{\gamma_{\phi}\gamma_{\theta}\left[\left(N_{\text{Rx}}^{\text{p}}N_{\text{Tx}}^{{i}}+1\right)(N_{\text{Rx}}^{\text{p}}N_{\text{Tx}}^{{i}}-1)\right]}_{(\text{c})}\right\},
\end{aligned}      
\end{equation}
where (a) denotes the computational complexity of constructing covariance matrix, (b) denotes the computational complexity of eigenvalue decomposition \cite{trefethen2022numerical}, and (c) denotes the computational complexity of calculating and searching spatial spectral function. 

\subsection{SNR gain of symbol-level fusion method of multi-BS sensing information}\label{se5-B} 
SNR is an important performance metric for sensing. In this section, the SNR gain of the proposed symbol-level fusion method of multi-BS sensing information is derived in \hyperref[the_3]{\textbf{Theorem 3}}.
\begin{theorem}\label{the_3}
    The SNR gain of the proposed symbol-level fusion method of multi-BS sensing information is
    \begin{equation}      \begin{cases}G_\mathrm{P}=\left(N_\text{c}-1\right)I\quad\left(\text{in location estimation}\right),\\G_\text{V}=\left(M_\text{sym}-1\right)I\quad\left(\text{in velocity estimation}\right).\end{cases}
    \end{equation}
\end{theorem}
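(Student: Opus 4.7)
My plan is to track the signal-to-noise ratio evolution through the symbol-level fusion processing chain, invoking the standard principle that the coherent sum of $N$ identical signal samples contaminated by independent zero-mean noise of equal variance yields an SNR gain of $N$ (signal power scales as $N^2$, noise power as $N$). I would treat the location and velocity branches separately but symmetrically, since Algorithms~2 and~3 share the same structural steps: a per-BS conjugate inner product that coherently sums the aligned exponentials of a feature vector, followed by a coherent accumulation across the $I$ TBSs.

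For the location branch I would begin from the range feature vectors $\mathbf{f}_{\text{p}}^{i}$ of (\ref{eq30}) and evaluate $\mathbf{p}_i$ in (\ref{eq39}) at the true grid index, where every exponent $-j2\pi n'\Delta f(\tau_{\text{p},p_{\text{d}}}^{i}-\tau_{\text{p,ns}}^{i})$ vanishes. The $(N_{\text{c}}-1)$ unit-magnitude terms then add coherently to produce a signal amplitude of $(N_{\text{c}}-1)$, while the corresponding noise contributions, treated as independent zero-mean residuals inherited from $\mathbf{N}_{\text{S}}^{i}$ through the coherent compression of Section~\ref{se3-D}, add incoherently with variance scaling as $(N_{\text{c}}-1)$. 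This establishes a per-BS SNR gain of $(N_{\text{c}}-1)$. The subsequent averaging in (\ref{eq40}) combines the $I$ position profiles whose signal components are phase-aligned at the same grid index $\tilde{p}_{\text{d}}$, producing a further coherent gain of $I$ because the noise terms across distinct TBSs are statistically independent (they originate from different physical propagation links and are separated by distinct BS identity codes). Multiplying the two independent factors yields $G_{\text{P}}=(N_{\text{c}}-1)I$.

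The velocity branch follows by direct analogy: starting from $\mathbf{e}_{\text{p}}^{i}$ of length $(M_{\text{sym}}-1)$ in (\ref{eq32}), the conjugate inner product in (\ref{eq45}) produces $(M_{\text{sym}}-1)$ aligned unit exponents at the true $(v,\theta)$ grid point, giving a per-BS gain of $(M_{\text{sym}}-1)$. The coherent accumulation across the $I$ Doppler profiles in (\ref{eq46}) contributes an additional gain of $I$, so that $G_{\text{V}}=(M_{\text{sym}}-1)I$.

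The main obstacle lies in rigorously justifying the independence and zero-mean character of the residual noise at each stage, especially after the NLCC step of Section~\ref{se3-C}, where the Hadamard product $\mathbf{D}_{\text{p,ns}}^{i}\circ(\mathbf{D}_{\text{p,s}}^{i})^{*}$ introduces signal-noise cross terms whose variance is modulated by the path attenuations $b_{\text{p,s}}^{i}$ and $b_{\text{p,ns}}^{i}$. I would handle this by invoking a high-SNR approximation so that the dominant residual after NLCC is a scaled linear combination of the original AWGNs $\mathbf{z}_{\text{S}}^{i}$, which preserves both independence across $(m,n)$ and across the BS index $i$, so that each of the two coherent summations behaves as in the idealized case. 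The remaining bookkeeping, namely tracking the normalization factors $1/N_{\text{Rx}}^{\text{p}}$, $1/M_{\text{sym}}$, $1/N_{\text{c}}$, and $1/I$ in (\ref{eq25}), (\ref{eq30}), (\ref{eq32}), (\ref{eq40}), and (\ref{eq46}), is routine because these factors scale signal and noise amplitudes identically and hence do not alter the SNR gain.
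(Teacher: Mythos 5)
Your proposal is correct and follows essentially the same route as the paper: the per-BS gain of $(N_\text{c}-1)$ (resp. $(M_\text{sym}-1)$) from the coherent summation in the conjugate inner product against a unit-modulus matching matrix, multiplied by a further gain of $I$ from the coherent accumulation across TBSs. Your additional care about the independence of the residual noise after the NLCC Hadamard product (via a high-SNR approximation) is a point the paper's proof simply assumes implicitly, but it does not change the structure of the argument.
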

\begin{proof}
    Assuming that the modulus of the $n'$-th element of the $i$-th range feature vector $\mathbf{f}_{\text{p}}^{i}$ obtained in Section \ref{se3} is $h_i$ and the variance of the noise is $snr_i=h_i^2\sigma^2$, the SNR of $\mathbf{f}_{\text{p}}^{i}$ is $1/\sigma^2$. In the localization method, the operation in Sections \ref{se4-1-1}, \ref{se4-1-2} and \ref{se4-1-3} will not affect the SNR of $\mathbf{f}_{\text{p}}^{i}$. Therefore, the SNR gain originates solely from the operation of conjugate inner product.

    Assume that the module of the $i$-th delay matching matrix $\mathbf{G}_{\text{mat}}^i$ is 1. 
    For (\ref{eq39}), the elements of the $i$-th position profile $\mathbf{p}_i$ are obtained by multiplying and summing each column of $\mathbf{G}_{\text{mat}}^i$ and $\mathbf{f}_{\text{p}}^{i}$ element by element. After the two elements are multiplied, as the module of $\mathbf{G}_{\text{mat}}^i$ is 1, the obtained result is the SNR of $\mathbf{f}_{\text{p}}^{i}$, which is $1/\sigma^2$. After the multiplication results are summed, the modules of signal and noise are $\left(N_{\text{c}}-1\right)h_i$ and $\left(N_{\text{c}}-1\right)h_i^2\sigma^2$, respectively. Therefore, the SNR of $\mathbf{p}_i$ is
    \begin{equation}
        snr_\text{c}^i=\frac{\left(N_\text{c}-1\right)^2h_i^2}{\left(N_\text{c}-1\right)h_i^2\sigma^2}=\frac{\left(N_\text{c}-1\right)}{\sigma^2}.
    \end{equation}

    (\ref{eq40}) is a coherent accumulation operation. According to \cite{richards2010noncoherent}, the SNR gain of coherent accumulation is related to the number of summation terms. Therefore, the SNR of the final position profile $\mathbf{p}_{\text{fn}}$ is
    \begin{equation}
        snr_{\mathrm{f}}=\frac{\left(N_{\text{c}}-1\right)I}{\sigma^2},
    \end{equation}
    and the SNR gain of the localization method is
    \begin{equation} \label{eq55}
        G_{\text{P}}=\frac{snr_{\text{f}}}{snr_i}=\left(N_{\text{c}}-1\right)I.
    \end{equation}
    Similarly, the SNR gain of the velocity estimation method is $G_\text{V}=\left(M_\text{sym}-1\right)I$.
\end{proof}

In summary, the proposed symbol-level fusion method of multi-BS sensing information obtains the SNR gain compared with  single-BS sensing, where the SNR gain is increasing with the increase of the number of TBSs.

\begin{table*}[!htbp]
	\caption{ Simulation parameters \cite{sturm2009ofdm,sturm2011waveform,wei2023carrier,wei2023symbol}}
	\label{tab_simulation}
	\renewcommand{\arraystretch}{1.3} 
	\begin{center}\resizebox{0.9\linewidth}{!}{
		\begin{tabular}{|m{0.07\textwidth}<{\centering}| m{0.32\textwidth}<{\centering}| m{0.13\textwidth}<{\centering}| m{0.07\textwidth}<{\centering}|m{0.32\textwidth}<{\centering}| m{0.13\textwidth}<{\centering}|}
			\hline
			\textbf{Symbol} & \textbf{Parameter} & \textbf{Value} & \textbf{Symbol} & \textbf{Parameter} & \textbf{Value} \\
			\hline
			$N_\text{c}$	& Number of subcarriers & $512$ & $M_{\text{sym}}$	& Number of symbols  & $256$  \\
			\hline
			$f_\text{c}$	& Carrier frequency & 24 GHz & $\Delta f$	& Subcarrier spacing  & 120 kHz \\
			\hline
			$T_{\text{ofdm}}$	& Elementary NC-OFDM symbol duration  & 8.33 $\rm {\mu s}$ & $T_{\text{p}}$	& Cyclic prefix length & 1.33 $\rm {\mu s}$ \\
			\hline
                $T$	& Entire NC-OFDM symbol duration  & 9.66 $\rm {\mu s}$  & $I$ & Number of TBS & $1\sim 4$  \\
               \hline
			$N_{\text{Tx}}^i $	& Number of transmitting antenna in TBS & $64$  & $N_{\text{Rx}}^\text{p} $	& Number of receiving antenna in TBS & $64$ \\
			\hline
			$(x^1,y^1)$	& The coordinate of the $1$-th TBS & (40 m, 0 m) &  $(x^2,y^2)$ & The coordinate of the $2$-th TBS & (0 m, 40 m) \\			
			\hline
		   $(x^3,y^3)$	& The coordinate of the $3$-th TBS & (0 m, 80 m)  &   $(x^4,y^4)$	& The coordinate of the $4$-th TBS & (80 m, 0 m)  \\
                \hline
              $(x^{\text{p}},y^{\text{p}})$	& The coordinate of PBS & (80 m, 80 m)  &  $(x_{\text{tar}},y_{\text{tar}})$	& The coordinate of target & (40 m, 40 m) \\
			\hline
             $v$	& Magnitude of velocity of the target & 27 m/s & $\theta$& Angle of velocity of the target & 0.785 radians  \\ \hline 
             $\xi_{ \tau,i}$ & Value of TOs & $\left[30,60\right]$ ns & $\xi_{ f,i}$ & Value of CFOs & $\left[0.03,0.06\right]\ \Delta f$ \\ 
			\hline
		\end{tabular}}
	\end{center}
\end{table*}

\section{Simulation} \label{se6}

In this section, the proposed multi-BS cooperative passive sensing method is
evaluated. Additionally, we compare the computational complexity of the proposed 
joint AoA and AoD estimation method with the traditional 2D-MUSIC method.

\subsection{Multi-BS Cooperative Passive Sensing}

In this section, the position and velocity profiles with symbol-level fusion method of sensing information using three TBSs are simulated.
Then, the root mean square errors (RMSEs) of location and velocity estimation are simulated. 
Table~\ref{tab_simulation} shows the main simulation parameters, 
which satisfies the requirement of IoV \cite{sturm2009ofdm,sturm2011waveform,wei2023carrier,wei2023symbol}. The simulation results are obtained with 10000 times Monte Carlo simulations.

\subsubsection{Position profiles}

To evaluate the feasibility of the proposed location estimation method, 
the simulation results are shown in Fig.~\ref{fig4} 
and the white lines represent contour lines.

Observing (a), (b), and (c) of Fig. \ref{fig4}, 
the trajectories of peak index values form a curve, 
which does not allow for accurate location estimation of target. 
Therefore, we can coherently accumulate multiple position profiles 
in (\ref{eq40}) to obtain location estimation of target, 
resembling finding intersection point of multiple curves.

Fig.~\ref{fig4(d)} shows the the final position profile 
obtained by performing coherent accumulation operation to the position profiles of three TBSs. 
According to (d) of Fig.~\ref{fig4}, the location estimation of the target is (39.99 m, 39.99 m). 
The error of target location estimation in x-axis direction is 0.01 m, which reveals the feasibility of location estimation of the proposed 
multi-BS cooperative passive sensing method. 

\begin{figure}[htbp]
\centering
\subfigure[Position profile of the 1-th TBS]{\includegraphics[width=0.48\linewidth]{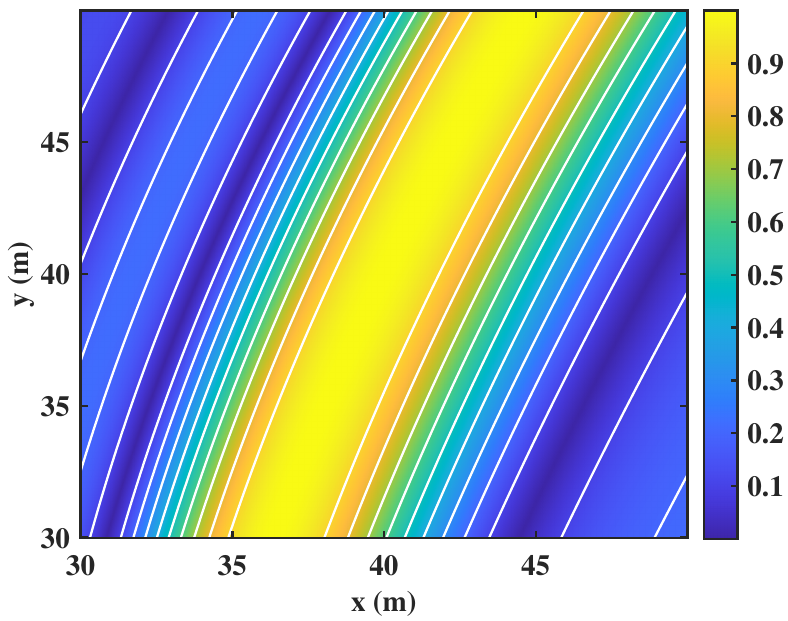}
\label{fig4(a)}}
\subfigure[Position profile of the 2-th TBS]{\includegraphics[width=0.48\linewidth]{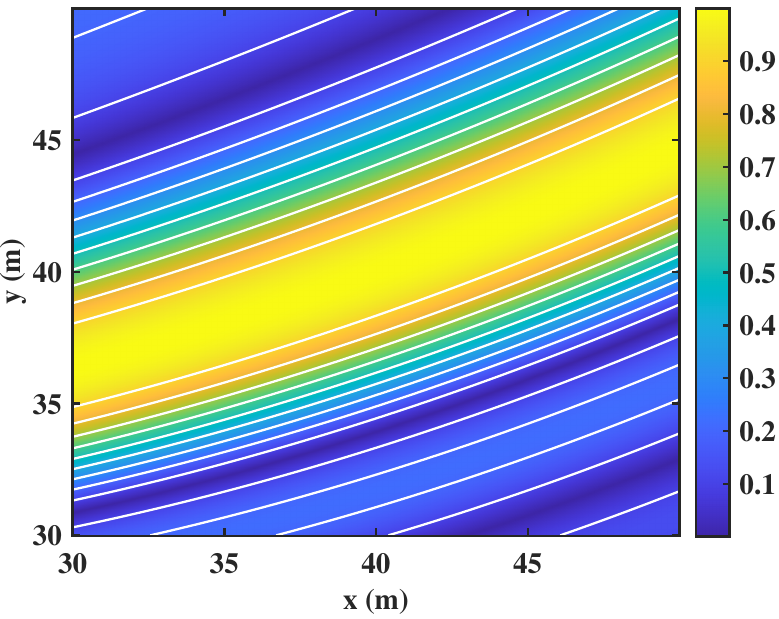}
\label{fig4(b)}}
\vspace*{0.5cm}
\subfigure[Position profile of the 3-th TBS]{\includegraphics[width=0.48\linewidth]{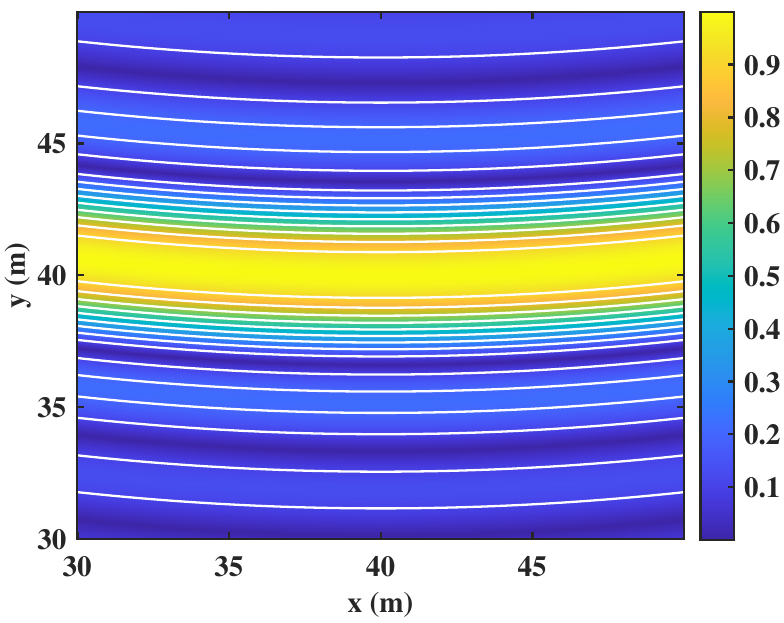}
\label{fig4(c)}}
\subfigure[The final position profile in (\ref{eq40})]{\includegraphics[width=0.48\linewidth]{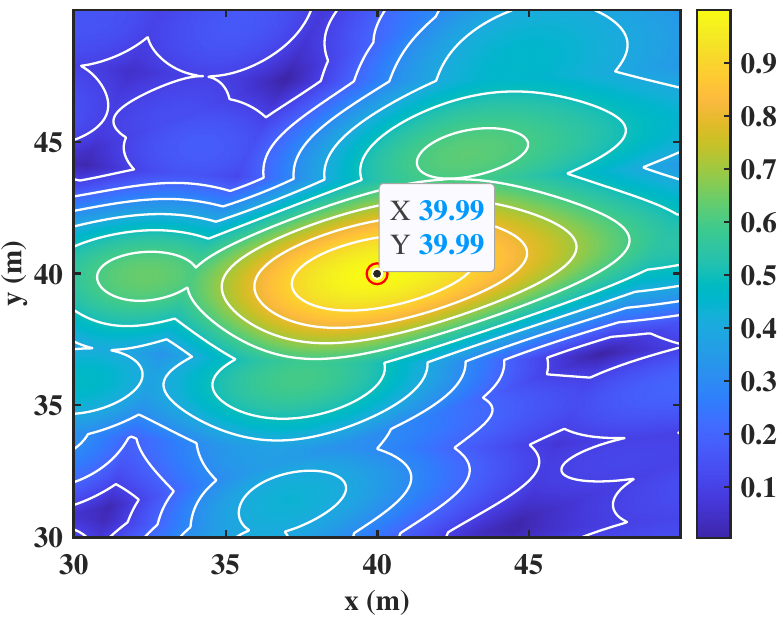}
\label{fig4(d)}}
\caption{Position profiles of location estimation with SNR = -5dB}
\label{fig4}
\end{figure}

\subsubsection{Velocity profiles}

To evaluate the feasibility of the proposed velocity estimation method, the simulation results are shown in Fig.~\ref{fig5}.

The meaning of the peak value in Fig. \ref{fig5} is that the horizontal axis value corresponding to the peak value is the estimated angle of target, and the vertical axis value is the estimated magnitude of target. However, the trajectories of index values corresponding to the peak values in (a), (b), and (c) of Fig. \ref{fig5} form a curve, which does not allow for accurate absolute velocity estimation of target. 
Therefore, we can coherently accumulate multiple velocity profiles in (\ref{eq46}) to obtain the absolute velocity estimation of target, resembling finding intersection point of multiple curves.

As illustrated in (d) of Fig.~\ref{fig5}, the peak corresponds to a velocity magnitude estimation of 27.3 m/s and an angle estimation of 0.754 radians. The errors in magnitude and angle estimation are 0.3 m/s and 0.03 radians, respectively, verifying the feasibility of absolute velocity estimation of the proposed multi-BS cooperative passive sensing method.

\begin{figure}[htbp]
\centering
\subfigure[Velocity profile of the 1-th TBS]{\includegraphics[width=0.48\linewidth]{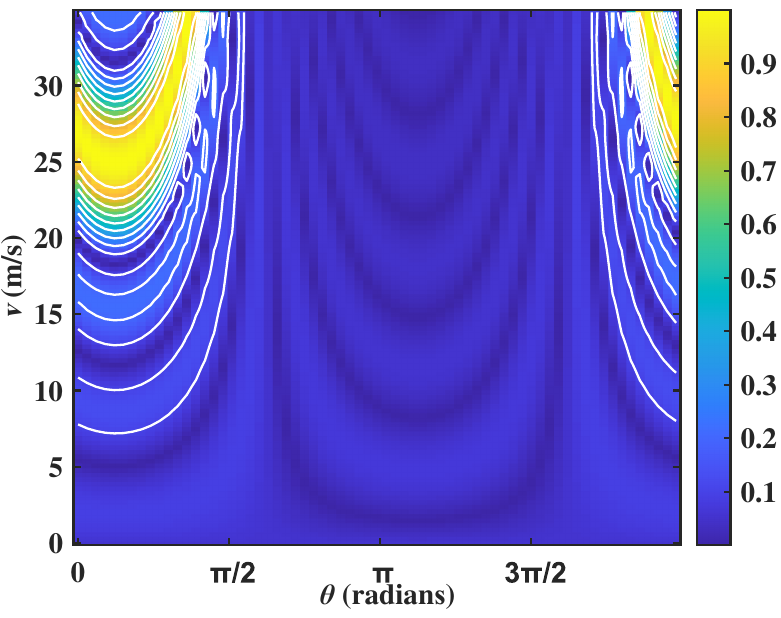}
\label{fig5(a)}}
\subfigure[Velocity of the 2-th TBS]{\includegraphics[width=0.48\linewidth]{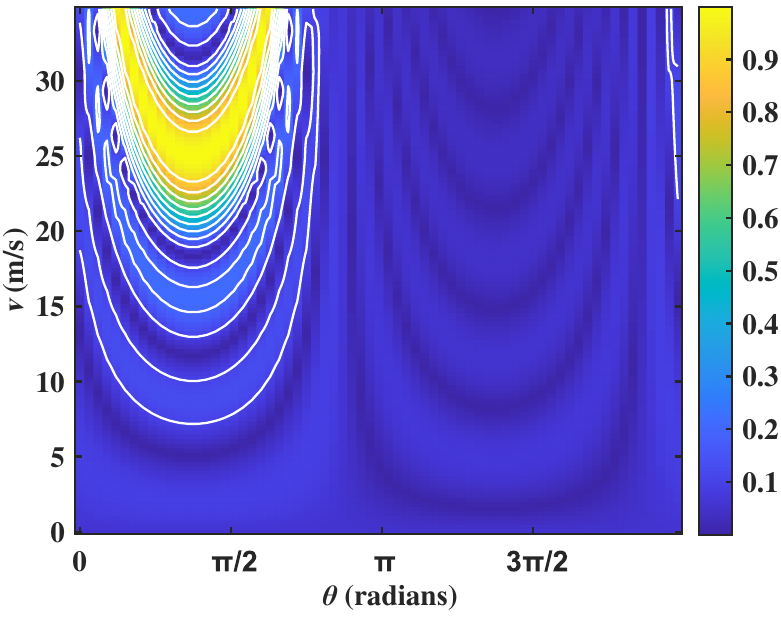}
\label{fig5(b)}}
\vspace*{0.5cm}
\subfigure[Velocity of the 3-th TBS]{\includegraphics[width=0.48\linewidth]{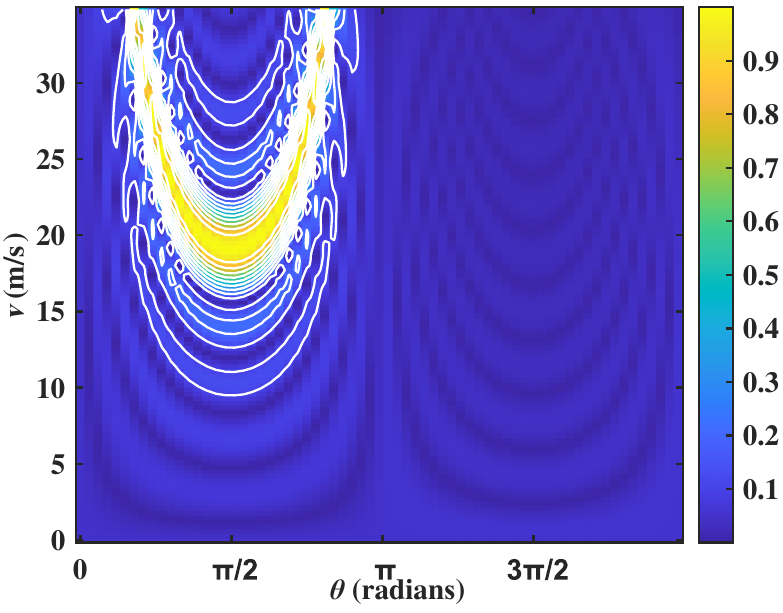}
\label{fig5(c)}}
\subfigure[The final velocity profile in (\ref{eq46})]{\includegraphics[width=0.48\linewidth]{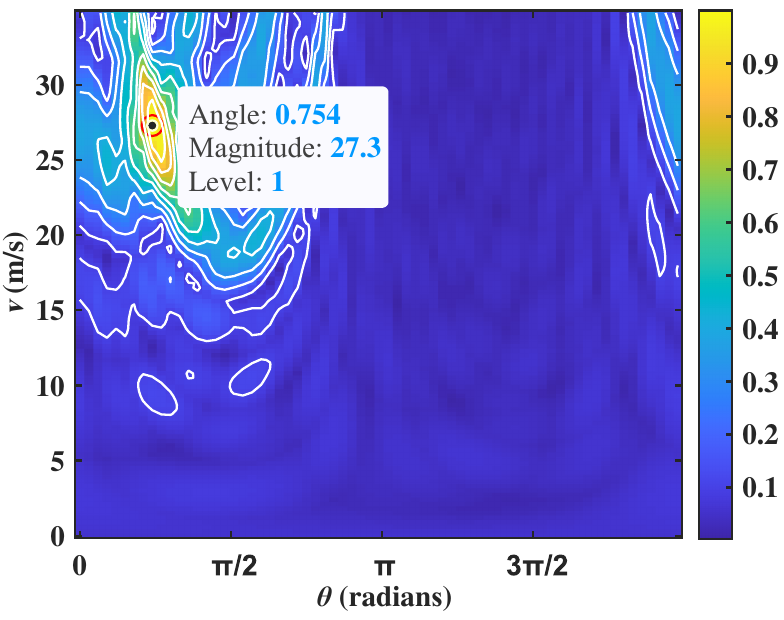}
\label{fig5(d)}}
\caption{Velocity profiles of velocity estimation with SNR = -5dB}
\label{fig5}
\end{figure}

\subsubsection{RMSE of location estimation}

The RMSE of location estimation with different numbers of TBSs 
are simulated as shown in Fig.~\ref{fig6}. For a single TBS passive sensing, we use the traditional AoA localization method~\cite{liu2024target}.
Observing the red, blue, and green lines in Fig.~\ref{fig6}, 
when the SNR is higher than -10 dB, 
the accuracy of location estimation of target reaches the centimeter level, revealing that the proposed multi-BS cooperative passive sensing scheme 
has high-accuracy location estimation of target.
As the number of TBSs increases, 
the lines gradually decline, 
verifying the superiority of the proposed multi-BS cooperative passive sensing scheme compared with single-TBS passive sensing.

The performance of the proposed NLCC method is further simulated under different values of TOs, as shown in Fig.~\ref{fig7}. 
Since the CFO has limited impact on the location estimation of target, 
a constant value is assigned when analyzing the performance of location estimation~\cite{jiang2023cooperation}. 
The solid and dashed lines refer to the cooperative passive sensing with and without NLCC, respectively. 
Cooperative passive sensing without NLCC refers to 
performing coherent compression directly using the delay-Doppler 
NLoS information matrices.
The dotted line refers to the cooperative passive sensing 
without synchronization error.

Firstly, as the value of TO increases, 
the dashed lines rise, indicating that the impact of TO on the performance of location estimation is related to the value of TO. 
Secondly, the solid lines are lower and almost overlapped  
compared to the dashed lines, 
indicating that the proposed NLCC method can mitigate the degradation 
of location estimation performance caused by TO. 
Finally, the solid lines closely approximate the dotted line in high SNR region, 
indicating that the performance of NLCC method 
is improving with the increase of SNR.

\begin{figure}
    \centering
    \includegraphics[width=0.45\textwidth]{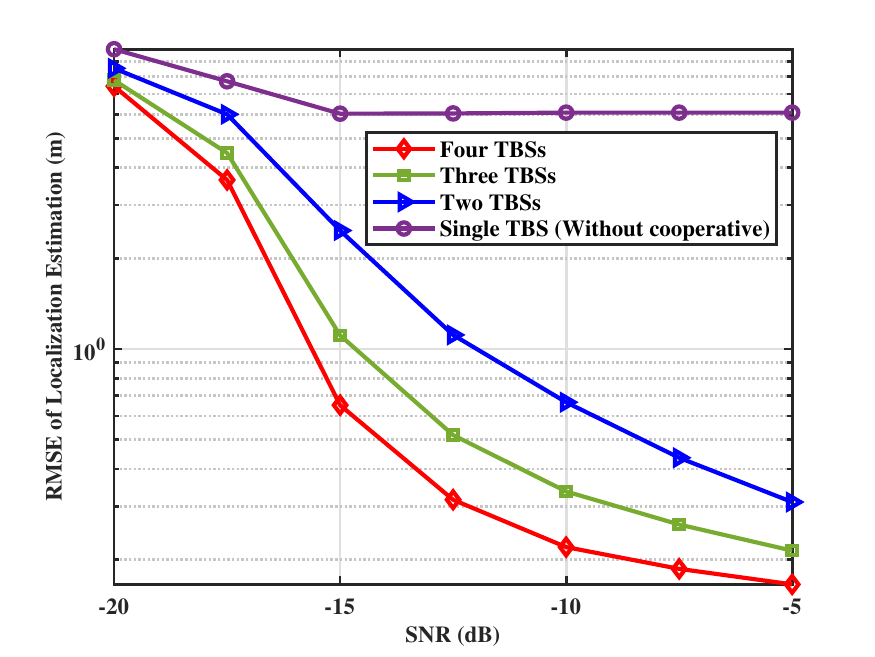}
    \caption{RMSE of location estimation for multi-BS cooperative passive sensing with TO = 30 ns and CFO = 0.03$\Delta f$}
    \label{fig6}
\end{figure}
\begin{figure}
    \centering
    \includegraphics[width=0.45\textwidth]{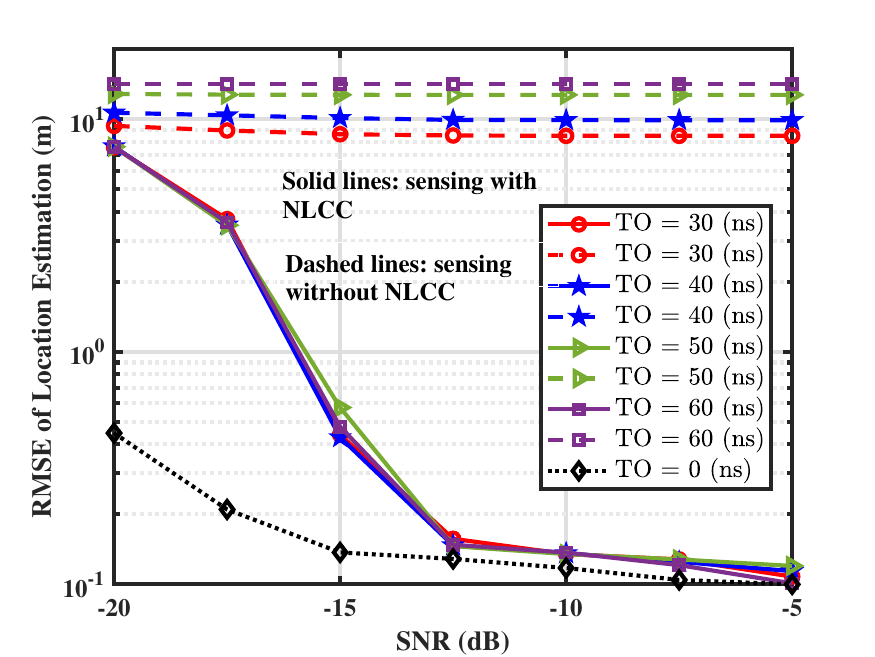}
    \caption{RMSE of location estimation with different TOs and CFO = 0.03$\Delta f$}
    \label{fig7}
\end{figure}

\subsubsection{RMSE of velocity estimation} 
The average RMSE of the magnitude and angle of velocity serves as the RMSE of velocity estimation in simulation.
Absolute velocity estimation of target remains infeasible with single TBS passive sensing and is excluded from the simulation.
Similarly, the RMSE of cooperative passive sensing with and without NLCC are compared.
As shown in Figs.~\ref{fig8} and~\ref{fig9}, the conclusions are similar to location estimation.

\begin{figure}
    \centering
    \includegraphics[width=0.45\textwidth]{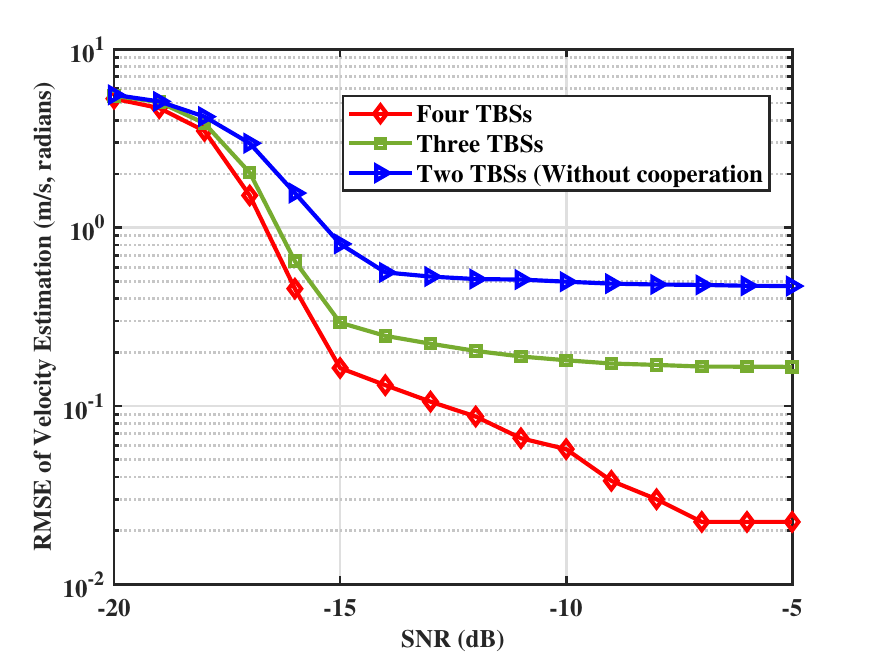}
    \caption{RMSE of velocity estimation for multi-BS cooperative passive sensing with TO = 30 ns and CFO = 0.03$\Delta f$}
    \label{fig8}
\end{figure}
\begin{figure}
    \centering
    \includegraphics[width=0.45\textwidth]{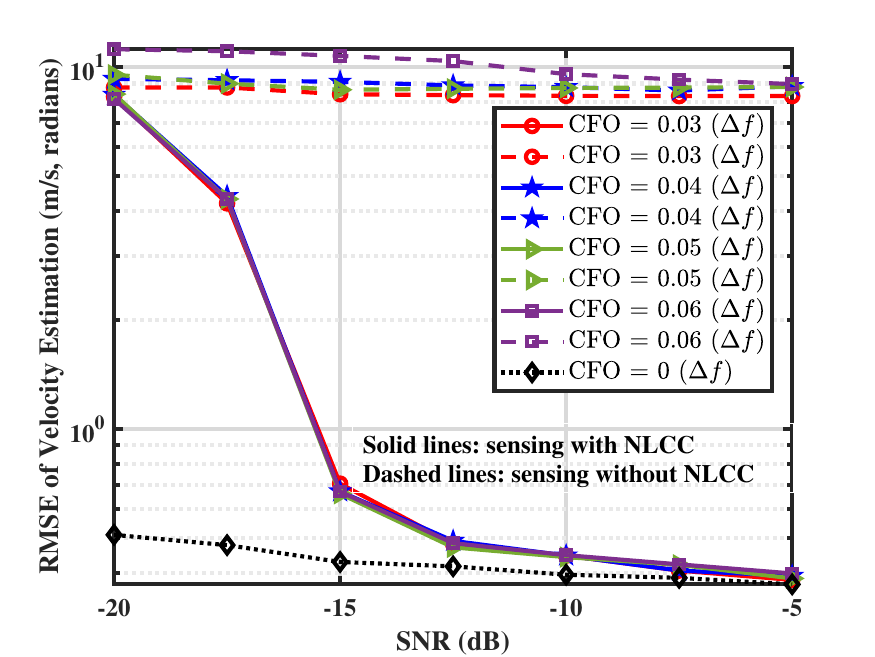}
    \caption{RMSE of velocity estimation with different CFOs and TO = 30 ns}
    \label{fig9}
\end{figure}

\subsection{Joint AoA and AoD Estimation}\label{se6-A}
In this section, the computational complexity of the proposed joint AoA and AoD estimation method and traditional 2D-MUSIC method are simulated while ensuring the same estimation accuracy between the two methods.
The estimation accuracy depends solely on the grid size of the fine estimation stage, thereby employing the same grid size $\phi_{\text{step}}$ = $\theta_{\text{step}}$ = 0.01 radians for both methods ensures the same estimation accuracy.
According to (\ref{eq50}) and (\ref{eq51}), the analytical simulation is revealed in Fig.~\ref{fig10}, with the number of received antennas as the independent variable.

\begin{figure}
    \centering
    \includegraphics[width=0.45\textwidth]{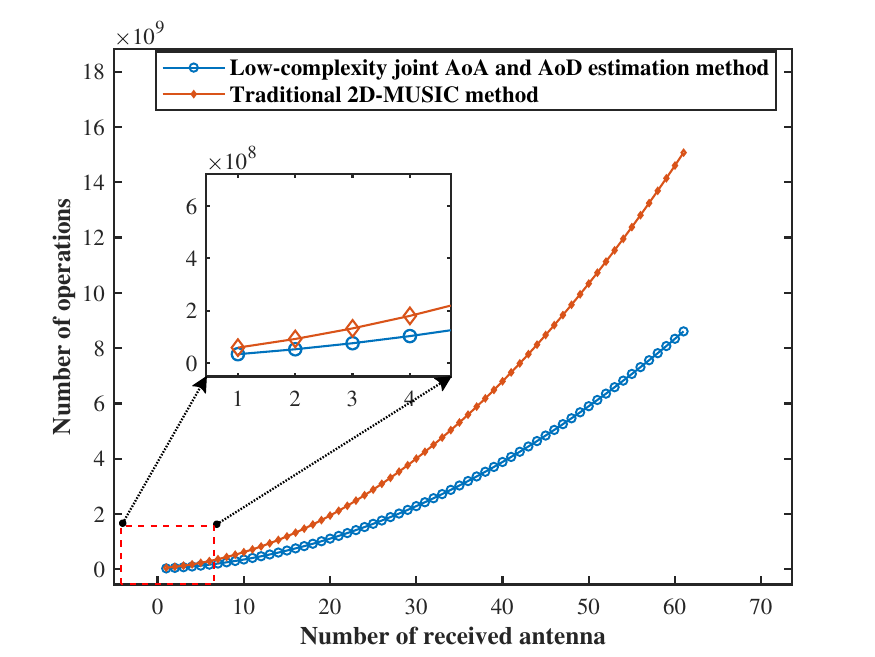}
    \caption{Computational complexity of the two methods}
    \label{fig10}
\end{figure}

As depicted in Fig. \ref{fig10}, the computational complexity of both methods increases rapidly with the increase of the number of received antennas. With the same number of received antennas, the computational complexity of the proposed method is lower than the traditional 2D-MUSIC method.
Furthermore, the disparity in computational complexity between the two methods increases as the number of received antennas increases.
Besides, we operate the two methods on the simulation equipment where the CPU parameter is 13-th Gen Inter (R) Core (TM) i9-13900KH, and the running time is illustrated in Table \ref{tab_runing}.

\begin{table}[ht]
\caption{The running time of two methods with $N_\text{Rx}^\text{p}$ = 64}
\label{tab_runing}
\renewcommand{\arraystretch}{1.5} 
\begin{center}
\resizebox{.48\textwidth}{!}{
\begin{tabular}{|ccc|}
\hline
\multicolumn{3}{|c|}{Computational complexity}       \\ \hline
\multicolumn{1}{|c|}{Method}        & \multicolumn{1}{|c|}{Our proposed method}  & 2D-MUSIC~\cite{pasya2014joint} \\ \hline
\multicolumn{1}{|c|}{Run Time (s)} & \multicolumn{1}{c|}{$0.33619$}   & \multicolumn{1}{c|}{$3.6458$}   \\ \hline
\end{tabular}}
\end{center}
\end{table}

\section{Conclusion} \label{se7}
In this paper, a multi-BS cooperative passive sensing scheme is proposed to obtain high-accuracy sensing performance. To overcome the synchronization error in passive sensing, the NLCC method is proposed to mitigate the CFO and TO by correlating the sensing information in NLoS path and LoS path. Then, a symbol-level fusion method of multi-BS sensing information is proposed to achieve high-accuracy location and absolute velocity estimation of target. Moreover, we propose a low-complexity joint AoA and AoD estimation method with rough and fine estimation stages to achieve high-accuracy AoA and AoD estimation with low complexity. The simulation results demonstrate the superiority of the proposed multi-BS cooperative passive sensing scheme, as well as the effectiveness of the proposed NLCC method.
The proposed fusion method does not yet utilize multi-BS sensing information. Future work will explore deep fusion and AI fusion in multi-BS cooperative passive sensing.

\newpage
\bibliographystyle{IEEEtran}
\bibliography{reference}

\end{document}